\documentclass[12pt]{article} 
\usepackage{enumitem}%%Enables control over enumerate and itemize environments
\setenumerate{itemsep=1pt, topsep=1pt}
\usepackage{setspace} %%Enables \doublespacing command for double linespacing
\expandafter\def\expandafter\quote\expandafter{\quote\onehalfspacing} %%Makes 1.5 quote spacing
\usepackage{graphicx}
\usepackage{authblk}

\usepackage{fancyhdr} %%Permits %\pagestyle{fancy} %%Header style
\usepackage{titlesec} %%Header style
\titlespacing*{\subsection}{\parindent}{.25in}{\wordsep}% Reduces spacing after headings
\usepackage{scrextend} %%Allows for changes to footnotes
\deffootnote[1em]{0in}{1em}{\textsuperscript{\thefootnotemark \ }} %%Footnote style
\usepackage{amssymb, amsmath, mathrsfs, amsthm, braket} %%Math packages
\numberwithin{equation}{section}

\usepackage{stmaryrd} %%Use \llbracket and \rrbracket for double brackets
\usepackage{physics}
\usepackage{parskip}
\usepackage{siunitx}
\usepackage{mathtools}
\usepackage{float}
\usepackage{tensor}
\usepackage{framed}
\usepackage[colorlinks=true, citecolor=blue, linkcolor=blue]{hyperref}
\usepackage{orcidlink}
\usepackage{diagbox}
\usepackage[nameinlink]{cleveref}

\crefname{buckingham}{Buckingham-$\Pi$ Theorem}{Buckingham-$\Pi$ Theorems}
\Crefname{buckingham}{Buckingham-$\Pi$ Theorem}{Buckingham-$\Pi$ Theorems}
\crefformat{buckingham}{#2Buckingham-$\Pi$ Theorem#3}
\Crefformat{buckingham}{#2Buckingham-$\Pi$ Theorem#3}

\crefformat{schwarz}{#2Schwarz's Theorem#3}

\newcommand{\R}{\mathbb{R}}
\newcommand{\Z}{\mathbb{Z}}

\newcommand{\N}{\mathbb{N}}
\newcommand{\C}{\mathbb{C}}
\renewcommand{\dim}{\mathrm{dim}}

\newtheorem{theorem}{Theorem}
\newtheorem{definition}[theorem]{Definition}
\newtheorem{lemma}[theorem]{Lemma}
\newtheorem{proposition}[theorem]{Proposition}
\newtheorem{corollary}[theorem]{Corollary}

\usepackage[round]{natbib} %%Or change 'round' to 'square' for square backers
\usepackage{etoolbox} %%For \citepos
\usepackage{xstring} %%For \citepos

\makeatletter %definition of \citepos
\DeclareRobustCommand\citepos % define \citepos
  {\begingroup
   \let\NAT@nmfmt\NAT@posfmt% same as for citet except with a different name format
   \NAT@swafalse\let\NAT@ctype\z@\NAT@partrue
   \@ifstar{\NAT@fulltrue\NAT@citetp}{\NAT@fullfalse\NAT@citetp}}
   
\let\NAT@orig@nmfmt\NAT@nmfmt %makes adapt to last names ending with an 's'.
\def\NAT@posfmt#1{%
  \StrRemoveBraces{#1}[\NAT@temp]%
  \IfEndWith{\NAT@temp}{s}
    {\NAT@orig@nmfmt{#1'}}
    {\NAT@orig@nmfmt{#1's}}}
\makeatother

\usepackage{csquotes}
\MakeOuterQuote{"}

\makeatletter
\newcommand*{\citelinktext}[2]{%
  \hyper@@link[cite]{}{cite.#1}{#2}}
\makeatother

\begin{document}
\title{\Large\textbf{Dimensional Analysis is a Gauge Theory}}
\author[]{Benjamin Muntz\footnote{\href{mailto:benjamin.muntz@nottingham.ac.uk}{benjamin.muntz@nottingham.ac.uk}}\ \ \orcidlink{0000-0002-0183-8783}}
\affil[]{School of Physics and Astronomy, University of Nottingham, University Park, Nottingham NG7 2RD, United Kingdom}
\affil[]{Nottingham Centre of Gravity, University of Nottingham, University Park, Nottingham NG7 2RD, United Kingdom}
\setlength{\affilsep}{\baselineskip}
\renewcommand\Affilfont{\itshape\small}

\date{\today}
\maketitle
\thispagestyle{empty}

\begin{abstract}
\noindent I argue that dimensional analysis can appropriately be thought of as a gauge theory. This picture naturally leads to the usual quantity calculus through inherent properties of Lie groups, Lie algebras, and representation theory. The gauge theory interpretation is perhaps strongest for non-constant units. I explain how Stevens's classification of scales of measurement can be understood by choices of different gauge groups. Finally, I reinterpret and rephrase the Buckingham-$\Pi$ Theorem in a way that also applies to typical gauge theories. Counting the number of ``$\Pi$''s becomes a well-known problem of invariant theory.
\end{abstract}

%\tableofcontents

%\doublespacing
\onehalfspacing

%%%%%%%%%%%%%%%%%%%%%%%% NOTES %%%%%%%%%%%%%%%%%%%%%%%%%

%%%%%%%%%%%%%%%%%%%%%%% DOCUMENT %%%%%%%%%%%%%%%%%%%%%%%

\section{Introduction}
Consider the following riddle:
\emph{
\begin{quote}
    Vectors live in vector spaces.\\
    Symmetries live in groups.\\
    But units --- the meter, second, gram, and so on --- live where? 
\end{quote}
}
A lot can be intuited from where someone lives. The same is true for mathematical objects. How and why vectors may be added, scaled, or transformed becomes mathematically firmer once we abandon cartoonish arrows and situate them in vector spaces. Likewise, symmetry transformations acquire foundation once we consider them as elements of groups (typically acting on something). But for units, the situation is strangely unclear. From everyday reasoning we understand and picture how one can add one meter to another meter, or one second to another second, yet adding one second to a meter is completely nonsensical. Moreover, multiplying two lengths yields an area, dividing distance by time yields a speed, and so on. There is clearly a coherent algebraic structure at play --- the so-called \emph{quantity calculus} --- but it is nowhere stated where that structure actually originates from. If vectors inherit their algebraic properties from the fact that they inhabit vector spaces, in what space do things like meters, seconds, and grams reside? 

Since the late nineteenth century, physicists and philosophers alike have sought to say precisely what quantities are and what role dimensionality plays in equations of physics and in theory of measurement. Today, discussions more often revolve around understanding the representational status of dimensionful quantities --- whether dimensionality is merely a conventional bookkeeping tool, or instead tracks objective quantitative structure, perhaps only up to equivalence under admissible reparameterisations of base units and dimensions \citep[see, e.g.,][]{Skow2017, Grozier2020, Sterrett2021, Jacobs2024, Jalloh2025}. Untangling all the accounts and long-standing debates would take us too far afield from the purpose of this paper, and, that being said, even my honest physicist attempt would most likely not do the subject proper justice. For that reason I will humbly point the reader to other excellent historical overviews such as \citet{deBoer1995}, \citet{Roche1998}, \citet{Mitchell2017}, and \citet{DeClark2017}.

Let us instead seize the luxury of brevity to jump right into action.

One outstanding problem in dimensional analysis today is the question of structural foundation. While the basic rules of quantity calculus are somewhat uncontroversial (add only like dimensions, multiplication of units adds their exponents, equations must be dimensionally homogeneous), the literature lacks consensus where exactly these properties are mathematically rooted. Several attempts exist: modern takes like \citet{Janyska2007, Janyska2010}, \citet{Tao2012}, \citet{Domotor2017}, \citet{Raposo2018}, and \citet{Zapata2022} have put forward specialised axiomatic models, but these approaches often take some version of the quantity algebra as primitive, rather than explaining it as an instance of a broader structure already familiar from physics. Multiplying mathematical definitions can at times contribute to shrouding physical intuition, leaving the physicist faintly unsatisfied. Dimensional analysis --- a discipline whose strength lies in its closeness to physical intuition --- is by no means immune to this pitfall. Burying dimensional analysis under a mountain of mathematical sediment for purely structural purposes therefore misses the mark; what gain is there when the soil does not nourish the roots? One would hope that an encompassing mathematical foundation ought not merely to recover said intuition, but to illuminate and more importantly \emph{further} it. We need not look far to achieve this. In fact, I hope to persuade the reader that there is no need to reinvent the wheel at all, and that the mathematical foundation for dimensional analysis has already been well-established and employed in physics for many decades. My message is precisely this: 

\vspace{6pt}
\begin{quote}
    \hyphenpenalty=10000
    Dimensional analysis can be wholly understood as a gauge~theory. 
\end{quote}
\vspace{6pt}

The paper is organised as follows: in \Cref{sec:gaugetheory} I detail how dimensional analysis reads as a gauge theory in the language of principal fibre bundles. Since the basic structure involves an abelian group (isomorphic to $\R_+^k$ with $k\in \mathbb{N}$),\footnote{I denote $\R_+$ as the Lie group of positive reals with multiplication as group operation. Some literature may prefer the notation $\R_{>0}$ or $\R^+$.} I will draw several parallels to $U(1)^k$ gauge theory along the way. In \Cref{sec:nonconstant} I explain how the gauge theory description also accommodates units that vary in spacetime --- a convention often discussed in the context of gravitational theories but nevertheless left out in most dimensional analysis literature. In \Cref{sec:dimensionless} I discuss how Stevens's classification of scales of measurement can be accounted for by the choice of structure group, yielding a gauge-theoretic distinction between notions of dimensionless and unit-invariant quantities. Finally, in \Cref{sec:buckingham}, I show how the infamous \cref{thm:Buckingham} can be understood from a gauge theory perspective. Importing it back into the context of gauge theory, I then discuss several examples and applications of the analogous version of the \cref{thm:Buckingham} for general (including non-abelian) gauge theories.

\section{Dimensional Analysis as a Gauge Theory}\label{sec:gaugetheory}
In this section I outline how dimensional analysis can be viewed as a gauge theory. I expect that the reader may find the mathematics presented here suspiciously elementary (in fact, it could just as well be used as a basic reminder of the mathematics of gauge theory). This is meant to reinforce the point that nothing exotic is required; all key features of dimensional analysis and quantity calculus will follow from very basic definitions and properties of fibre bundles and representation theory, paralleling the physicist's way of introducing ordinary gauge theory relevant to particle physics. That being said, my intention is by no means to undermine the core message and perspective that is being conveyed. Notions such as unit `transformations', `\mbox{-independence}', and `choosing a basis of units' become exact mirrors of gauge `transformations', `-invariance', and `-fixing'. Notwithstanding, I daresay most theoretical physicists today find discussions of units much less palatable than discussions of gauge theory. Today there is widespread culture and convention in theoretical physics to set all base units equal to 1 in order to circumvent the hassle of keeping track of units altogether. Yet, we are all taught cautionary tales about how ``gauge-fixing too early'' can obstruct our physical understanding.\footnote{Consider for example classical electromagnetism where $A_0$ plays the role of a Lagrange multiplier imposing Gau\ss's law $\vec{\nabla}\cdot \vec{E} =\rho$. If one picks temporal gauge $A_0=0$ \emph{before} varying the action, this removes the constraint and Gau\ss's law is no longer enforced.} Fixing units from the get-go can in certain cases also lead one astray. An example of this was highlighted in \citet{KaramitsosMuntz2025} in the context of the Swampland Programme, which sparked the idea to pursue this work.

Now, just as the concept of symmetries is inseparable from gauge theory, the role of symmetries in dimensional analysis has always been lurking on the surface since its historical origin. \citet{Fourier1878} recognised early on that dimensional homogeneity should be understood as the requirement that an equation remains valid under changes of units; when a base unit is replaced by a rescaled one, numerical values transform by definite \emph{conversion factors}, and ``dimensions'' encode the exponents governing this transformation \citep{DeClark2017}. This perspective later became central in the nineteenth-century development of dimensional formulae, where these were used to organise unit conversion and systems of measurement. Put this way, changing units is in broad strokes a symmetry of representation. `Conversion factors' are emphasised here because they are arguably more universal than the numerical coefficients of quantities: translating a length expressed in, say, metres to feet requires only a single scale factor that applies uniformly across any numerical input. The freedom to rescale $k$ independent base units can accordingly be encoded in the multiplicative group $\R_+^k$. The presence of a group structure, however, does not in itself lend credence to a gauge theory prescription. Most elementary discussions of dimensional analysis stop at a single, global choice of scale. As I will further argue in \Cref{sec:nonconstant}, there is no reason to impose that restriction from the outset. Imagine a world where Europe and the US were two isolated regions with no communication between their respective residents --- Europeans could go about using metric units completely unaware that their friends across the pond enjoy Imperial units instead. Only upon bridging the gap need we be mindful how to convert our measurements so that we may understand one another. Luckily we do not live in such a world. It does, however, exemplify an important detail: the choice of unit is evidently \emph{local}. Organising the freedom to make such a local choice of scale and compare choices on overlaps is precisely what a principal bundle is for.\footnote{\label{footnote:conformalgeometry}I should also acknowledge that much of my thinking has been influenced by conformal geometry as explained in \citet{CurryGover2017}. Here, an object of interest is ${\mathcal{Q}\subset S^2T^*M}$, the bundle of symmetric $(0,2)$-tensors $g$ on $M$, where all $g$ belong to the same conformal equivalence class $[g]$. That is, $g\sim g'$ if they are related by a Weyl transformation $g'=\Omega^2 g$ where $\Omega^2$ is some positive function on $M$. This has the structure of a principal $\R_+$-bundle (also known as a ray subbundle) $\mathcal{Q}\to \mathcal{G}\to M$ where locally $\mathcal{G}\simeq U\times \{g\}$, i.e. the fibre of $\mathcal{G}$ contains only the representative $g$ of $[g]$. Note that there is no unique choice of $\mathcal{G}$ since one can pick any $\tilde{g}\in [g]$ as the representative. Still, they all lead to isomorphic bundle structures. A `choice of scale' on a conformal manifold can be related to a choice of unit, as also alluded to in \citet{KaramitsosMuntz2025} and later in \Cref{sec:scalartensor}.} This motivates the following basic definition.

\begin{definition}
	The \textbf{scale bundle} is a principal $\R_+^k$-bundle $P\to M$ for some $k\in \N$. 
\end{definition}

For example, for $k=3$, we could label each $\R_+$ by the mechanical base dimensions $\mathrm{M,L,T}$. The right action of $(1,2,1)\in \R_+^3$ sends a frame to one in which the length scale is doubled. Note however that, a priori, $\mathrm{M,L}$, and $\mathrm{T}$ are just names. One could attach physical meaning to each dimension --- e.g.~$\mathrm{L}$ is associated with physical lengths between points on the base manifold --- but that is strictly not necessary as far as dimensional analysis is concerned. As a mere calculational tool, the `practice of dimensional analysis' (by this I mean for instance algebraically manipulating dimensionalities or testing an equation for dimensional homogeneity) is indifferent to what the base dimensions are called or what they physically represent. Moreover, any other choice of base dimensions, say $\mathrm{M,L}$, and $\mathrm{V=LT^{-1}}$, we can just recognise as a change of basis in $\R_+^3$. These arguments generalise to any $k$.

Notice also that the element $(1,2,1)\in \R_+^3$ in the above example does more than just double lengths: it scales areas by four, volumes by eight, and so on. How is this accounted for? The common strategy employed by many models of dimensional analysis, including some mentioned previously in the introduction, is to functorially carry over the appropriate transformation properties by engineering quantities such as areas and volumes as `tensor powers' (integer or rational\footnote{Another common disposition in the dimensional analysis literature is to consider only quantities whose units are raised to rational powers. \citet[p.~71]{Raposo2019} even goes as far as to argue that the reals are ``unnecessarily oversized'' and in fact integers are entirely sufficient. I find this propensity confusing and surmise that this modelling choice stems from the fact that most physics disciplines only ever encounter rational powers in practice (in particular classical mechanics, where there is a long tradition of applying dimensional analysis). But again, as far as dimensional analysis as a calculational tool is concerned, nothing is stopping us from manipulating e.g.~a length to some irrational power. A comment under \citeauthor{Tao2012}'s blog post also rightfully points out that irrational powers do in fact show up in physics: for CFT two-point functions, schematically $\sim (\text{length})^{-2\Delta}$, the scaling dimension $\Delta$ can be non-rational (and is in practice not expected to be rational). Restricting to $\mathbb{Z}$ or $\mathbb{Q}$ thus reads more like artificial aesthetics rather than inferred by some underlying physical principle. The gauge theory picture renders this commitment unnecessary.}) of lengths. This enforces the correct scaling by splitting quantities of different dimensions into a proliferating family of distinct bundles or graded components. Subsequently, one stipulates a coherence principle (functor) that makes the same scale or unit transformation act compatibly on all of them. That is, it manufactures the simultaneity rather than explaining it.\\
\indent Turning to the gauge theory point of view, the interpretation is somewhat different. Now is a good time to make an analogy with $U(1)$ gauge theory. Under the group action by some element $e^{i\theta}\in U(1)$, one field (i.e.~a section of a vector bundle associated with the principal $U(1)$-bundle) may pick up a phase $\phi\mapsto e^{i\alpha\theta}\phi$ and another $\psi\mapsto e^{i\beta\theta}\psi$, with $\alpha\neq \beta$. Any physicist will perhaps appreciate at a basic level that this is because $\phi,\psi$ possess different $U(1)$ \emph{charges}. Put in more mathematical terms, what is really meant is that $\phi$ and $\psi$ transform under different \emph{weighted representations} of $U(1)$. My claim is that it is natural to think similarly about the dimensionality of quantities: two quantities whose dimensionalities are e.g.~$\mathrm{L}^\alpha$ and $\mathrm{L}^\beta$ simply transform under different weighted representations of $\R_+^k$. They have ``different charges'' under scale transformations.\footnote{In the language of \citet{Gomes2024, Gomes2025}, here I am adopting a symmetry-first perspective of gauge theory. Alternatively, one could explore dimensional analysis in a geometry-first formulation, where the symmetry group manifests as the automorphisms of the internal geometry. While it is unclear to me at the present moment whether this presents any merit paralleling \citeauthor{Gomes2024}'s account of the Standard Model, I find the option very interesting.}

That being said, there are subtle differences between dimensional analysis and its $U(1)$ counterpart. Because $\R_+^k$ is contractible, any principal $\R_+^k$-bundle over a paracompact manifold is (topologically) trivial. In other words, any local trivialisation extends to a global trivialisation $P\simeq M\times \R_+^k$. Although we lose some of the interesting structure and properties of fibre bundles, the picture still remains conceptually useful. Perhaps it is even a benefit, as it distinguishes dimensional analysis from other gauge theories we usually encounter in physics: there is no charge quantization, no monopoles/instantons, and representations classify flat connections up to gauge. A particle physicist may likely label dimensional analysis as the most boring gauge theory.

Now, with $G=\R_+^k$ as our Lie group, its Lie algebra is $\mathfrak{g}\simeq \R^k$. Viewing $\mathfrak{g}$ as an additive group, the log map is a smooth Lie group isomorphism.
\begin{equation}
\begin{split}
	\log\colon\qquad \qquad \quad  G & \longrightarrow \mathfrak{g} \\
    (g_1,\dots,g_k) & \longmapsto  (\log g_1,\dots, \log g_k)
\end{split}	
\end{equation}
Charges then live in what is known as the weight space.

\begin{definition}
    The \textbf{weight space} related to the scale bundle is the dual vector space $\mathfrak{g}^*= \mathrm{Hom}_\R(\mathfrak{g},\R)$. A \textbf{weight} $w\in \mathfrak{g}^*$ defines a one-dimensional representation $\rho_w\colon \R_+^k \to \R_+\subset\mathrm{GL}(1,\R)$ given by $\rho_w(g) = \exp(w(\log g))$ for $g\in \R_+^k$.
\end{definition}
Consider again the example where $k=3$. The representation corresponding to the weight $w=(w_1,w_2,w_3)\in \mathfrak{g}^*\simeq \R^3$ maps $g=(g_1,g_2,g_3)\in \R_+^3$ to 
\begin{equation}
    \rho_w(g) = e^{w(\log g)} = g_1^{w_1}g_2^{w_2}g_3^{w_3}
\end{equation}
which is the appropriate scale factor associated to a quantity with dimensionality $(w_1,w_2,w_3)$.

Returning to the case of $U(1)$ (and also other non-trivial Lie groups), one has to be slightly more careful in identifying the set of possible charges. Although any $\lambda\in \mathfrak{g}^*$ defines a Lie algebra representation, only some exponentiate to global characters $\chi\colon U(1)\to \C^\times$. Compactness implies a bounded image, and thus only the subset $\mathfrak{w}^*\subset \mathfrak{g}^*$ that is integral on $\mathrm{ker}(\exp\colon \mathfrak{g}\to G)$ is compatible with the global topology. For $U(1)$ the weights lie on a lattice $\mathfrak{w}^*\simeq \Z$ which is why charges become quantized. For $\mathbb{R}_+$ the exponential map is a global diffeomorphism with trivial kernel, so any such $\lambda$ exponentiates to a global character, i.e. $\mathfrak{w}^*=\mathfrak{g}^*$.

The fact that charges effectively add up under multiplication then follows from basic representation theory.
\begin{lemma}\label[lemma]{lem:tensorproductrep}
    Let $w,w'\in \mathfrak{g}^*$ and $\rho_w,\rho_{w'}$ be the corresponding one-dimensional representations. Then $\rho_w\otimes \rho_{w'}\simeq \rho_{w+w'}$.
\end{lemma}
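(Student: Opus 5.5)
The plan is to write down an explicit isomorphism of one-dimensional representations and check equivariance, using only the definition of $\rho_w$ and the fact that $\log\colon G\to\mathfrak{g}$ is a group isomorphism.

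\emph{Setting up.} Let $V_w\simeq\R$ and $V_{w'}\simeq\R$ be the carrier spaces, on which $g\in\R_+^k$ acts by multiplication by $\rho_w(g)$ and $\rho_{w'}(g)$. The tensor product representation acts on $V_w\otimes V_{w'}$ by
\[
(\rho_w\otimes\rho_{w'})(g)(v\otimes v') = \rho_w(g)v\otimes\rho_{w'}(g)v'.
\]
Since $V_w\otimes V_{w'}$ is one-dimensional, spanned by $1\otimes 1$, I will take the linear map
\[
\Phi\colon V_w\otimes V_{w'}\to V_{w+w'}, \qquad v\otimes v'\mapsto vv'.
\]
This is well defined by the universal property, since multiplication is bilinear. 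It is an isomorphism because it sends the basis vector $1\otimes1$ to $1$.

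\emph{Intertwining.} The key computation is that $\Phi$ intertwines the two actions:
\[
\Phi\bigl(\rho_w(g)v\otimes\rho_{w'}(g)v'\bigr)=\rho_w(g)\rho_{w'}(g)\,vv'.
\]
It therefore suffices to show $\rho_w(g)\rho_{w'}(g)=\rho_{w+w'}(g)$. This follows from
\[
\exp\bigl(w(\log g)\bigr)\exp\bigl(w'(\log g)\bigr)=\exp\bigl(w(\log g)+w'(\log g)\bigr)=\exp\bigl((w+w')(\log g)\bigr),
\]
where the first step is the homomorphism property of the real exponential and the second uses the vector space structure of $\mathfrak{g}^*$ (pointwise addition of linear functionals). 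In the $k=3$ example this is simply $g_1^{w_1}g_1^{w_1'}\cdots = g_1^{w_1+w_1'}\cdots$.

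\emph{Obstacles and a remark.} There is no real obstacle; the only point needing care is stating clearly what "$\simeq$" means, namely equivalence via a $G$-equivariant linear isomorphism. One should also note that $\rho_{w+w'}$ is again a valid representation. For $\R_+^k$ this is automatic, since $\mathfrak{w}^*=\mathfrak{g}^*$. In the $U(1)$ case it requires that the integral weights form a lattice closed under addition, which is worth remarking on as the parallel. Optionally, I will add that at the Lie algebra level the differentials satisfy $d(\rho_w\otimes\rho_{w'}) = w\otimes 1+1\otimes w' \mapsto w+w'$, which is the infinitesimal form of the statement.
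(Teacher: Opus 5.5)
Your proposal is correct and takes essentially the same route as the paper, whose proof is the one-line computation $(\rho_w\otimes\rho_{w'})(g)=\rho_w(g)\rho_{w'}(g)=e^{w(\log g)}e^{w'(\log g)}=e^{(w+w')(\log g)}=\rho_{w+w'}(g)$. Your explicit intertwiner $v\otimes v'\mapsto vv'$ only spells out the identification $\R\otimes\R\simeq\R$, which the paper leaves implicit.
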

\begin{proof}
    For $g\in G$ we can simply compute
    \begin{equation}
        (\rho_w\otimes \rho_{w'})(g) = \rho_{w}(g)\rho_{w'}(g) = e^{w(\log g)}e^{w'(\log g)} = e^{(w+w')(\log g)} = \rho_{w+w'}(g)\,.
    \end{equation}
\end{proof}

In the previous discussion I motivated how one can get away with thinking about quantities like charged fields in ordinary gauge theory. Let us intuit this structural equivalence from another perspective by considering \citeauthor{Poincare1906}'s famous thought experiment. For the sake of conciseness I provide here only a compact version, however a more elaborate account can be found in \cite{Poincare1906} \citep[see also][]{Jalloh2025}: imagine you wake up one morning in a world where every physical realisation of length has uniformly become a thousand times greater, with no additional mark left behind. Would you notice? Empirically speaking, the answer to this question appears to be no, since every comparative statement producing dimensionless observables would return the exact same output. \citet{Jalloh2025} labels this as exhibiting an \emph{ontic symmetry}: there is a transformation which changes the numerical representatives of quantities for any given unit system while leaving their ratios unchanged. Descriptions related by such a transformation should therefore be treated as equivalent, insofar as the transformation generates an empirically indistinguishable system. Notwithstanding, once we turn to the language of fibre bundles, \citeauthor{Poincare1906}'s thought experiment receives a straightforward reinterpretation. Imagine (as an inhabitant on the scale bundle) you wake up one morning and overnight someone has performed a right action of an element $g\in \mathbb{R}_+^k$, sending $p\mapsto p\cdot g$ for an original choice of frame $p\in P$. If $q\in \mathbb{R}$ denotes a putative numerical representative of a weight-$w$ quantity, then the overnight transformation $p\mapsto p\cdot g$ is equivalently representable by a corresponding rescaled number $\rho_w(g)q$ (an active scale transformation). Using some suggestive notation, allow me to write this identification or equivalence relation as
\begin{equation}
    [p\cdot g,q]\sim [p,\rho_w(g)q]\,.
\end{equation}
That is just the intrinsic property of an associated bundle!
\begin{definition}\label[definition]{def:quantity}
    Given the weighted representation $\rho_w$ we define the associated line bundle $\mathcal{L}_w = P\times_{\rho_w}\mathbb{R}$. A \textbf{quantity of weight} $w$ is a section $Q\in \Gamma(\mathcal{L}_w)$.
\end{definition}
The definition above considers quantities as $\mathbb{R}$-valued sections. This of course does not accommodate all of physics. The velocity vector, Maxwell tensor, Dirac spinor, and so on, are not scalars but can nevertheless be dimensionful. Rest assured that the definition is easily generalised. It goes as follows: let $\mathcal{V}$ be an $\R$-vector space (or more generally over any field $\mathbb{F}\supseteq\R$) and $\mathcal{V}M$ its vector bundle over the base manifold $M$. Define $\mathcal{V}$-valued quantities of weight $w$ as sections of $\mathcal{V}M\otimes \mathcal{L}_w$. As such, it entirely suffices to work at the level of $\mathcal{L}_w$. I will therefore stick to real quantities for simplicity and to stay close to default discussions of dimensional analysis. 

In the global trivialization $P\simeq M\times \R_+^k$, a section of $\mathcal{L}_w$ is equivalent to an equivariant function $q \colon P\to \R$ such that $q(p\cdot g) = \rho_w(g)^{-1}q(p)$ for all $p\in P$ and $g\in \R_+^k$. When $w=0$ we say that the quantity is \textbf{dimensionless} (invariant under the fibre action).\footnote{Note that a quantity being dimensionless does not necessarily imply that it is unit-invariant. This subtlety is discussed in more detail in \Cref{sec:dimensionless}.}

\begin{lemma}\label[lemma]{lem:equivariantfunction}
    Consider the scale bundle $P\to M$ and let $\rho_w$ denote any one-dimensional representation as above. The space of smooth equivariant functions is isomorphic to the space of smooth sections $\Gamma(\mathcal{L}_w)$.
\end{lemma}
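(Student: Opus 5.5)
I will prove the statement by constructing explicit mutually inverse maps between the two spaces. This is the standard correspondence for associated bundles, specialised to the line bundle $\mathcal{L}_w = P\times_{\rho_w}\R$.

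Write $C^\infty(P,\R)^{\rho_w}$ for the smooth functions $q\colon P\to\R$ with $q(p\cdot g)=\rho_w(g)^{-1}q(p)$. Points of $\mathcal{L}_w$ are classes $[p,x]$ under $[p\cdot g,x]\sim[p,\rho_w(g)x]$, and $\pi\colon P\to M$ is the bundle projection.

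\emph{Forward map.} Given an equivariant $q$, define $Q(\pi(p)) = [p,q(p)]$. Well-definedness is the key check. Any other point of the fibre is $p\cdot g$, and
\[
[p\cdot g, q(p\cdot g)] = [p\cdot g, \rho_w(g)^{-1}q(p)] \sim [p, q(p)]
\]
by the defining relation. Smoothness follows by choosing a local section $s\colon U\to P$, where $Q = [s, q\circ s]$ is a composition of smooth maps. Since $P$ is globally trivial one may even take $s$ global, but this is not needed.

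\emph{Backward map.} Given $Q\in\Gamma(\mathcal{L}_w)$, define $q(p)$ as the unique $x\in\R$ with $Q(\pi(p))=[p,x]$. Uniqueness uses that the action of $\R_+^k$ on $P$ is free: if $[p,x]=[p,y]$, then the defining relation gives $y=\rho_{w}(e)x = x$. Existence holds because every point of the fibre over $\pi(p)$ has the form $[p\cdot g, x']=[p,\rho_w(g)x']$.

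\emph{Equivariance and smoothness of $q$.} From $Q(\pi(p)) = [p,q(p)] = [p\cdot g,\rho_w(g)^{-1}q(p)]$ we read off $q(p\cdot g) = \rho_w(g)^{-1}q(p)$. For smoothness, work in the local trivialisation $P|_U\simeq U\times\R_+^k$ with $p=(m,h)$. There $\mathcal{L}_w|_U\simeq U\times\R$ via $[(m,h),x]\mapsto (m,\rho_w(h)x)$. Writing $Q(m)=(m,f(m))$ with $f$ smooth, we get $q(m,h)=\rho_w(h)^{-1}f(m) = e^{-w(\log h)}f(m)$, which is manifestly smooth.

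The two constructions are inverse by inspection, and both are $C^\infty(M)$-linear, hence an isomorphism of $C^\infty(M)$-modules and not merely a bijection. The main obstacle is only the bookkeeping of where $\rho_w(g)$ versus $\rho_w(g)^{-1}$ appears. In particular, one must check that the convention $[p\cdot g,q]\sim[p,\rho_w(g)q]$ is indeed compatible with the equivariance law $q(p\cdot g)=\rho_w(g)^{-1}q(p)$. I will verify this explicitly, as in the well-definedness step above, and no deeper difficulty is expected since $\R_+^k$ is abelian and acts freely.
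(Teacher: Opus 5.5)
Your proposal is correct and follows the same route as the paper: you use the forward map $Q(\pi(p))=[p,q(p)]$, which is well defined by equivariance, and the converse of reading off an equivariant function from a representative of the section. You also supply details the paper leaves implicit, namely uniqueness via freeness of the action, smoothness checked in a local trivialisation, and $C^\infty(M)$-linearity of the correspondence.
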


\begin{proof}
    Given an equivariant $q\colon P\to \R$ define $Q(\pi(p)) = [p,q(p)]$ in the associated bundle; equivariance ensures the element $[p,q(p)]$ is independent of the choice of representative in the fibre. Conversely, any representative of a section in a trivialization gives an equivariant function.
\end{proof}

We do not have to look far to realise that quantities defined as above obey the well-known algebraic rules of dimensional analysis. They again follow from basic representation theory.

\begin{proposition}
    Properties of quantity calculus follow naturally.
    \begin{enumerate}[label=\alph*)]
        \item If $Q,Q'$ are quantities of weight $w,w'$, then $QQ'$ is a quantity of weight $w+w'$.
        \item If $Q,Q'$ are quantities of weight $w$, then $Q+Q'$ is a quantity of weight $w$.
        \item One can only add/subtract quantities of equal weight (dimensional homogeneity).
    \end{enumerate}
\end{proposition}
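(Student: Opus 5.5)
I will work throughout with the equivariant-function description of quantities given by \cref{lem:equivariantfunction}. A quantity $Q$ of weight $w$ is identified with a smooth $q\colon P\to\R$ satisfying $q(p\cdot g)=\rho_w(g)^{-1}q(p)$. With this identification, products and sums of sections are simply defined pointwise on $P$, and each claim reduces to checking an equivariance identity.

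For (a), I take $q,q'$ representing $Q,Q'$ and set $(qq')(p)=q(p)q'(p)$. Then
\[
(qq')(p\cdot g)=\rho_w(g)^{-1}\rho_{w'}(g)^{-1}q(p)q'(p)=\rho_{w+w'}(g)^{-1}(qq')(p).
\]
The last equality is exactly the computation of \cref{lem:tensorproductrep}, namely that $\rho_w\rho_{w'}=\rho_{w+w'}$ as characters. Hence $qq'$ is equivariant of weight $w+w'$, and by \cref{lem:equivariantfunction} it defines a section of $\mathcal{L}_{w+w'}$. Intrinsically, this is the bundle isomorphism $\mathcal{L}_w\otimes\mathcal{L}_{w'}\simeq\mathcal{L}_{w+w'}$ induced by \cref{lem:tensorproductrep}, applied to $Q\otimes Q'$.

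For (b), I define $(q+q')(p)=q(p)+q'(p)$. Equivariance follows from linearity of multiplication by the scalar $\rho_w(g)^{-1}$. Equivalently, $\Gamma(\mathcal{L}_w)$ is a vector space because $\mathcal{L}_w$ is a vector bundle, with fibrewise addition $[p,a]+[p,b]=[p,a+b]$. This addition is well defined since $\rho_w(g)$ acts linearly.

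Part (c) is the only step needing real content, because ``can only add'' has to be made precise. My plan is to interpret it as follows. Addition is only defined fibrewise within a single vector bundle. If $w\neq w'$, the naive pointwise sum $q+q'$ of representatives is not equivariant for any single weight, so it does not define a section of any $\mathcal{L}_{w''}$.

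To show this, I pick $\xi\in\mathfrak{g}$ with $w(\xi)\neq w'(\xi)$, which exists since $w\neq w'$, and set $g=\exp(t\xi)$. Suppose $q+q'$ were equivariant of some weight $w''$. Then for all $t$,
\[
e^{-t w(\xi)}q(p)+e^{-t w'(\xi)}q'(p)=e^{-t w''(\xi)}\bigl(q(p)+q'(p)\bigr).
\]
Exponentials with distinct exponents are linearly independent as functions of $t$. So at any point where $q$ and $q'$ do not vanish, this identity forces a contradiction. Hence the sum is well defined only in the degenerate case where one summand vanishes there.

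Dimensional homogeneity of an equation $Q=Q'$ then follows in the same way. Comparing the two sides under the fibre action shows that a nontrivial equality of quantities requires $w=w'$. The main obstacle is therefore choosing the right formalisation of ``only''. I would state it as the non-equivariance of the sum, with the trivial zero-section caveat made explicit.
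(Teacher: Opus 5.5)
Your parts (a) and (b) follow the paper's argument: pass to equivariant functions via \Cref{lem:equivariantfunction} and check the transformation law. The paper phrases (b) as bilinearity, $aQ+bQ'$, but it is the same computation.

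Part (c) is where you take a genuinely different, and correct, route. The paper treats dimensional homogeneity as a type-level fact. For $w\neq w'$, $Q$ and $Q'$ are sections of different bundles, so no addition between them is defined. That is immediate and needs no caveat, because the zero sections of $\mathcal{L}_w$ and $\mathcal{L}_{w'}$ are distinct objects. It is, however, close to tautological, since one may ask why the sum cannot be formed in some common ambient space.

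Your argument answers that question. In $C^\infty(P)$, where the representatives can be added, you take $g=\exp(t\xi)$ and use linear independence of exponentials with distinct exponents. This shows that $q+q'$ is equivariant of no weight at any point where both summands are nonzero. It is a rigorous $\R_+^k$ version of the ``out of sync'' remark the paper makes only for $U(1)^k$ in \Cref{sec:U(1)}.

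Your conclusion can be sharpened. A single $w''$ must work globally, and $w\mapsto\rho_w$ is injective. So wherever only $Q$ is nonzero you need $w''=w$, and wherever only $Q'$ is nonzero you need $w''=w'$. Hence the sum is a quantity only if $Q\equiv 0$ or $Q'\equiv 0$.

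The price of your route is exactly the zero-function caveat you flag: the zero function on $P$ is equivariant for every weight.
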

\begin{proof}
    Write $Q=[p,q(p)]$ and $Q'=[p,q'(p)]$. By invoking \Cref{lem:equivariantfunction} we can work at the level of the equivariant functions $q,q'$. One by one:
    \begin{enumerate}[label=\alph*)]
       \item Suppose $Q\in \Gamma(\mathcal{L}_w)$ and $Q'\in \Gamma(\mathcal{L}_{w'})$.
    \begin{equation}
    \begin{split}
        q(p\cdot g) q'(p\cdot g) &= \rho_w(g)^{-1}\rho_{w'}(g)^{-1} (q(p) q'(p))\\
        &= \rho_{w+w'}(g)^{-1} ( q(p) q'(p))\\
    \end{split}
    \end{equation}
    Hence $(qq')(p) \coloneqq q(p) q'(p)$ corresponds to a quantity $QQ'\in \Gamma(\mathcal{L}_{w+w'})$.
    \item Bilinearity in $\R$ as a real vector space carries over to bilinearity in $\mathcal{L}_w$. Let $a,b\in \R$ and $Q,Q'\in \Gamma(\mathcal{L}_w)$.
    \begin{equation}
    \begin{split}
        aq(p\cdot g)+bq'(p\cdot g) &= \rho_w(g)^{-1}(aq(p)+bq'(p))
    \end{split}
    \end{equation}
    Thus $aQ+bQ'\in \Gamma(\mathcal{L}_w)$.
    \item Homogeneity follows from the fact that quantities of different dimensionality simply live in different spaces. If $Q\in \Gamma(\mathcal{L}_w)$ and $Q'\in \Gamma(\mathcal{L}_{w'})$ with $w\neq w'$, addition between them is mathematically ill-defined.
    \end{enumerate}
\end{proof}

There is also a more operational way to understand dimensionful quantities from the gauge-theoretic perspective. Consider for instance $c$, the speed of light. Viewed as a quantity in its own right, with non-zero weights $w_\mathrm{L}=1$ and $w_\mathrm{T}=-1$, we have demonstrated that it (and any other speed for that matter) can be interpreted as a particular section of a weighted line bundle. But the speed of light also serves another purpose; as often brought to bear in relativistic physics, it can equivalently be viewed as a map $t\mapsto ct$ sending time intervals to length intervals, or vice versa $x\mapsto x/c$ length into time. This is close to what \citet{Jacobs2022} has described as an \emph{inter-quantity relation}: a dimensionful constant articulates how distinct families of quantities (time and length in the example involving $c$) are related in a way that is stable under changes of scale. Weighted line bundles manifestly embody this intuition.

\begin{corollary}
    An equivalent way to view quantities is as homomorphisms between weighted line bundles, since homomorphisms $\mathcal{L}_{w}\to \mathcal{L}_{w+w'}$ correspond to pointwise multiplication by a section of $\mathcal{L}_{w
        '}$.
\end{corollary}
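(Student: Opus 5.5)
The plan is to identify $\mathrm{Hom}(\mathcal{L}_w,\mathcal{L}_{w+w'})$ with $\mathcal{L}_w^*\otimes\mathcal{L}_{w+w'}$. The dual of $\mathcal{L}_w$ is $\mathcal{L}_{-w}$, since the dual representation of $\rho_w$ is $g\mapsto\rho_w(g)^{-1}=\rho_{-w}(g)$. By \Cref{lem:tensorproductrep}, the tensor product is then $\mathcal{L}_{-w}\otimes\mathcal{L}_{w+w'}\simeq\mathcal{L}_{w'}$. This uses the standard fact that the associated bundle construction $P\times_\rho(\cdot)$ commutes with duals, tensor products and $\mathrm{Hom}$: the associated bundle of $\mathrm{Hom}(V,W)$ with its induced representation $\phi\mapsto\rho_W(g)\,\phi\,\rho_V(g)^{-1}$ is $\mathrm{Hom}(P\times_{\rho_V}V,\,P\times_{\rho_W}W)$. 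Smooth bundle homomorphisms covering the identity are exactly sections of this $\mathrm{Hom}$ bundle, so they correspond to $\Gamma(\mathcal{L}_{w'})$.

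To make the statement about \emph{pointwise multiplication} explicit, I would argue at the level of equivariant functions, invoking \Cref{lem:equivariantfunction}.

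In one direction, let $Q'\in\Gamma(\mathcal{L}_{w'})$ with equivariant function $q'$. Define $\Phi_{Q'}(Q)=Q'Q$. By part a) of the Proposition this lies in $\Gamma(\mathcal{L}_{w+w'})$. Pointwise, $\Phi_{Q'}([p,r])=[p,q'(p)r]$, and this is well defined on fibres because
\[[p\cdot g,\rho_w(g)^{-1}r]\mapsto[p\cdot g,\rho_{w'}(g)^{-1}q'(p)\rho_w(g)^{-1}r]=[p,q'(p)r].\]
The map is fibrewise $\R$-linear, so it is a bundle homomorphism.

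In the other direction, let $\Phi:\mathcal{L}_w\to\mathcal{L}_{w+w'}$ be a homomorphism. For each $p$, the frame $p$ identifies both fibres over $\pi(p)$ with $\R$ via $r\mapsto[p,r]$. A linear map $\R\to\R$ is multiplication by a scalar, so define $\phi(p)$ by $\Phi([p,r])=[p,\phi(p)r]$. Comparing this with the frame $p\cdot g$ gives
\[\phi(p\cdot g)=\rho_{w+w'}(g)^{-1}\rho_w(g)\phi(p)=\rho_{w'}(g)^{-1}\phi(p).\]
So $\phi$ is equivariant of weight $w'$, and by \Cref{lem:equivariantfunction} it defines a section $Q'$ with $\Phi=\Phi_{Q'}$. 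Smoothness of $\phi$ follows from smoothness of $\Phi$ and of local sections of $P$.

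The two constructions are mutually inverse by construction, which gives the bijection. It is also an isomorphism of $C^\infty(M)$-modules.

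The only real subtlety is checking the equivariance computation in the converse direction with the correct sign and inverse conventions for $\rho_w$. The rest is routine.
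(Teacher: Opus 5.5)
Your proof is correct and takes essentially the paper's route. The paper gives no separate proof: it treats the corollary as immediate from part a) of the quantity-calculus Proposition (multiplying by a weight-$w'$ section maps $\mathcal{L}_w$ to $\mathcal{L}_{w+w'}$) together with \Cref{lem:tensorproductrep}, i.e.\ $\mathcal{L}_w^*\otimes\mathcal{L}_{w+w'}\simeq\mathcal{L}_{w'}$, and your equivariant-function computation for the converse direction, with the inverse conventions handled correctly, just makes explicit the step the paper leaves implicit.
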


It is important to remark that any trivialisation $\varphi\colon M\to P$ of the scale bundle naturally induces a canonical section $\mu = [\varphi,1]$ in the associated bundle
\begin{equation}
    \mathcal{L}_w^+ \coloneqq P\times_{\rho_w}\R_+\,.
\end{equation}
In other words, $\mu$ can be viewed as a strictly positive weight-$w$ quantity. Notice now the isomorphism
\begin{equation}
    \Gamma(\mathcal{L}_w)\simeq {}^{\displaystyle \left(C^\infty(M,\R)\times \Gamma(\mathcal{L}_w^+)\right)} {\big\slash}_{\displaystyle \sim}
\end{equation}
where the quotient identifies $(\Pi,\mu)\sim (\lambda^{-1}\Pi,\lambda\mu)$ for $\lambda\in C^\infty(M,\R_+)$. I highlight this isomorphism for the following observation: provided a ${\mu \in \Gamma(\mathcal{L}_w^+)}$ (for instance induced by a trivialisation $\varphi$), any quantity $Q\in\Gamma(\mathcal{L}_w)$ can be written uniquely as
\begin{equation}
    Q =\Pi \mu
\end{equation}
for a unique smooth function $\Pi \in C^\infty(M,\R)\simeq \Gamma(\mathcal{L}_0)$. Moreover, because $\mu$ is everywhere non-vanishing, the map is invertible; the ratio $\Pi=Q/\mu$ is also well-defined. The reader may recognise this as the familiar metrological picture in which a quantity is expressed as a number-unit pair! Writing $Q=[p,q]$, the numerical coefficient $\Pi$ is related to the equivariant $q$ by\footnote{\label{footnote:Pitransforming}Note that, while $\Pi$ is a function on $M$ rather than $P$, it transforms under $\varphi\mapsto \varphi g$ as $\Pi(x) \mapsto \rho_{w}(g)^{-1}\Pi(x)$ due to equivariance of $q$. This is just the passive transformation ensuring $Q=\Pi\,\mu$ is invariant under the group action.}
\begin{equation}\label{eq:Pi_definition}
    \Pi = q\circ \varphi\,.
\end{equation}
In fact, because the associated bundle $\mathcal{L}_w^+$ is just the extension of the structure group of $P$ along the homomorphism $\rho_w\colon \R_+^k\to \R_+$, it is also itself a principal $\R_+$-bundle, with right action given by scale factors $[p,q]\cdot \rho_w(g) = [p,\rho_w(g)q]$. For $k=1$ and $w\neq 0$, $\rho_w$ is an isomorphism, so $P\simeq \mathcal{L}_w^+$ is just the scale bundle itself; the structure group has been relabelled through $\rho_w$. Moreover, sections $\mu\in\Gamma(\mathcal{L}_w^+)$ become trivialisations of $P$.

Finally, we are able to return to the riddle posed in the introduction of this paper: 
\vspace{6pt}
\begin{quote}
    \centering \emph{Where do units live?}
\end{quote}
\vspace{6pt}
We now have the answer:
\vspace{6pt}
\begin{quote}
    \centering A unit is a section $\mu\in \Gamma(\mathcal{L}_w^+)$.
\end{quote}
\vspace{6pt}
Or equivalently, for $k=1$, a trivialisation of the scale bundle --- units are \emph{choices of gauge}.

For $k>1$, one has to be slightly more careful. In this case $\rho_w$ has non-trivial kernel and $\mathcal{L}_w^+$ remembers only a one-dimensional quotient of the full scale bundle. I.e.~$P\not\simeq \mathcal{L}_w^+$ for any $w$ if $k>1$. If we wanted to reconstruct the scale bundle, we would need $k$ non-degenerate associated bundles. The picture is again familiar from dimensional analysis: suppose we have $k$ base dimensionalities (e.g.~$\mathrm{M},\mathrm{L},\mathrm{T},\dots$), we need to specify $k$ linearly independent base units (e.g.~$\SI{1}{\kg},\SI{1}{\meter},\SI{1}{\second},\dots$) in order to parameterise any dimensionful quantity. When there are either fewer units than dimensions or they are degenerate, this is clearly not possible. With more units than dimensions it is on the other hand possible, but the decomposition is no longer unique \citep{Jacobs2024}.

\begin{proposition}\label[proposition]{prop:scalebundleisomorphism}
    Let $P\to M$ be the scale bundle and $w_1,\dots,w_n\in \mathfrak{g}^*$ for some $n\in\N$. The map
    \begin{equation}\label{eq:isomorphism}
    \begin{split}
        \mathsf{\Phi}\colon\quad P & \to \mathcal{L}_{w_1}^+\times_M\cdots \times_M\mathcal{L}_{w_n}^+\\
        p & \mapsto ([p,1],\dots,[p,1]) 
    \end{split}
    \end{equation}
    is a smooth bundle morphism. It is an isomorphism iff $\{w_1,\dots,w_n\}$ forms a complete basis of $\mathfrak{g}^*$. Equivalently, iff the character map
    \begin{equation}
    \begin{split}
        \chi\colon\quad G & \to \R^n_+\\
        g & \mapsto (\rho_{w_1}(g),\dots,\rho_{w_n}(g))
    \end{split}
    \end{equation}
    is a Lie group isomorphism.
\end{proposition}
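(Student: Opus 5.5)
The plan is to reduce everything to a statement about the structure groups. The map $\mathsf{\Phi}$ intertwines the right $G$-action on $P$ with a $G$-action on the fibre product, and that action factors through the character map $\chi$. Once this is in place, being an isomorphism becomes a fibrewise question about $\chi$. That question is then settled by linear algebra in $\mathfrak{g}^*$ after applying $\log$.

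First I will establish well-definedness and smoothness. Each component $p\mapsto[p,1]$ is the composition of the smooth inclusion $P\to P\times\R_+$, $p\mapsto(p,1)$, with the smooth quotient map onto $\mathcal{L}_{w_i}^+$. All components cover the same point $\pi(p)$, so $\mathsf{\Phi}$ lands in the fibre product and commutes with the projections to $M$.

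Next I will check equivariance. Each $\mathcal{L}_{w_i}^+$ is a principal $\R_+$-bundle, as noted above, with action $[p,q]\cdot s=[p,sq]$. Using the defining relation $[p\cdot g,q]\sim[p,\rho_w(g)q]$ I will get
\begin{equation}
\mathsf{\Phi}(p\cdot g)=\big([p,\rho_{w_1}(g)],\dots,[p,\rho_{w_n}(g)]\big)=\mathsf{\Phi}(p)\cdot\chi(g).
\end{equation}
So $\mathsf{\Phi}$ is a morphism of principal bundles along the homomorphism $\chi\colon G\to\R_+^n$. I read this as the meaning of ``smooth bundle morphism'' in the statement.

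For the isomorphism criterion I will work fibrewise over a point $x\in M$ and fix some $p_0\in\pi^{-1}(x)$. The fibre of $P$ is then identified with $G$ via $g\mapsto p_0\cdot g$. The fibre of the product is identified with $\R_+^n$ via $s\mapsto\mathsf{\Phi}(p_0)\cdot s$, which works because each factor is a principal $\R_+$-torsor. Under these identifications $\mathsf{\Phi}$ restricted to the fibre is exactly $\chi$. Hence $\mathsf{\Phi}$ is bijective if and only if $\chi$ is bijective.

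It remains to identify when $\chi$ is bijective. I will conjugate by the log isomorphisms $G\cong\mathfrak{g}$ and $\R_+^n\cong\R^n$. This turns $\chi$ into the linear map $\xi\mapsto(w_1(\xi),\dots,w_n(\xi))$, since $\rho_{w}=\exp\circ\, w\circ\log$. A linear map $\R^k\to\R^n$ is bijective precisely when $n=k$ and the $w_i$ are linearly independent, i.e.\ when $\{w_1,\dots,w_n\}$ is a basis of $\mathfrak{g}^*$. In that case $\chi$ is a smooth homomorphism with smooth inverse $\exp\circ L^{-1}\circ\log$, where $L$ is the linear map above, so it is a Lie group isomorphism; this gives the stated equivalence.

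In the basis case the inverse of $\mathsf{\Phi}$ is smooth. Locally I can choose a smooth section $\sigma$ of $P$ and write $\mathsf{\Phi}(\sigma(x)\cdot g)=\mathsf{\Phi}(\sigma(x))\cdot\chi(g)$. In the induced local trivialisations $\mathsf{\Phi}$ is then $\mathrm{id}\times\chi$, which is a diffeomorphism. Conversely, if the $w_i$ are not a basis, $\chi$ is either non-injective or non-surjective, so $\mathsf{\Phi}$ fails to be injective or surjective on fibres.

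I expect the main obstacle to be bookkeeping rather than substance. One point is the convention in the relation $[p\cdot g,q]\sim[p,\rho_w(g)q]$, which has to be used consistently so that $\mathsf{\Phi}$ is equivariant along $\chi$ and not along $\chi^{-1}$. Either way the isomorphism criterion is unaffected. The other point is making precise the torsor identification of the fibres of the fibre product.
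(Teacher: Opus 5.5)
Your proposal is correct and follows the paper's proof. Both use the same equivariance computation $\mathsf{\Phi}(p\cdot g)=\mathsf{\Phi}(p)\cdot\chi(g)$, then write $\chi=\exp\circ W\circ\log$ with $W(X)=(w_1(X),\dots,w_n(X))$ and reduce to linear algebra on $\mathfrak{g}^*$; your fibrewise torsor identification and local-trivialisation argument additionally make explicit both directions of ``$\mathsf{\Phi}$ is an isomorphism iff $\chi$ is'' and the smoothness of $\mathsf{\Phi}^{-1}$, which the paper leaves to the standard fact about equivariant principal bundle morphisms covering the identity.
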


\begin{proof}
    Compute
    \begin{equation}
    \begin{split}
        \mathsf{\Phi}(pg) &= ([pg,1],\dots,[pg,1])\\
        &= ([p,\rho_{w_1}(g)],\dots,[p,\rho_{w_n}(g)])\\
        &= \mathsf{\Phi}(p)\cdot \chi(g)
    \end{split}
    \end{equation}
    where the last equality indicates the character acting diagonally from the right. So $\mathsf{\Phi}$ is $G$-equivariant after transporting the $G$-action through $\chi$. If $\chi$ is an isomorphism, then $\mathsf{\Phi}$ becomes a bundle morphism between two principal $G$-bundles.\\
    Next we need to show that this is equivalent to saying $\{w_1,\dots,w_n\}$ must be a complete basis of $\mathfrak{g}^*$. Consider therefore the map $W\colon \mathfrak{g}\to \R^n$ with $W(X)\coloneqq (w_1(X),\dots,w_k(X))$ such that $\chi = \exp\circ W\circ \log$. Since $\log$ and $\exp$ are diffeomorphisms, $\chi$ is an isomorphism iff $W$ is a linear isomorphism. But that is only the case iff $\{w_1,\dots,w_n\}$ is a complete basis of $\mathfrak{g}^*$. (And so necessarily $n=k=\dim\,\mathfrak{g}^*$.) This completes the proof.
\end{proof}

Provided $\{w_1,\dots,w_k\}$ forms a complete basis of $\mathfrak{g}^*$, any trivialisation $\varphi$ of $P$ corresponds, via \Cref{prop:scalebundleisomorphism}, to a section 
\begin{equation}
    \mu = (\mu_1,\dots,\mu_k)\in \Gamma(\mathcal{L}_{w_1}^+\times_M\cdots \times_M\mathcal{L}_{w_k}^+)
\end{equation}
with $\mu_i=[\varphi,1]$ for all $i=1,\dots,k$. In more basic terms we could call $(\mu_1,\dots,\mu_k)$ a choice of \textbf{base units}.
It follows that we can always decompose any dimensionful quantity in terms of a choice of base units.
\begin{corollary}\label[corollary]{cor:unitdecomposition}
    Let $\{w_i\}_{i=1,\dots,k}$ be a complete basis of $\mathfrak{g}^*$ and $\{\mu_i\}_{i=1,\dots,k}$ be a choice of base units. For every $v\in \mathfrak{g}^*$ and every quantity $Q\in \Gamma(\mathcal{L}_v)$ there exists a unique set of real numbers $r_i\in \R$ and a unique smooth function $\Pi\in \Gamma(\mathcal{L}_0)\simeq C^\infty(M,\R)$ such that
    \begin{equation}
        Q(x) = \Pi(x) \prod_{i=1}^k (\mu_i(x))^{r_i}\,.
    \end{equation}
\end{corollary}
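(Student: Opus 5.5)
The proof has three parts: determine the exponents $r_i$ from the weights, build the positive reference quantity $\prod_i\mu_i^{r_i}$ of weight $v$, and then apply the number-unit decomposition $Q=\Pi\mu$ stated earlier for a single positive section.

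\emph{Exponents.} Since $\{w_i\}$ is a basis of $\mathfrak{g}^*$, there are unique real numbers $r_1,\dots,r_k$ with $v=\sum_i r_i w_i$. These $r_i$ are the only possible exponents. Indeed, if $Q=\Pi\prod_i\mu_i^{r_i}$ with $\Pi$ of weight $0$, then the weight of $Q$ must be $\sum_i r_i w_i$. This forces $\sum_i r_i w_i=v$, and linear independence of the $w_i$ then forces the $r_i$.

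\emph{Real powers of units.} For a positive section $\mu_i=[\varphi,1]\in\Gamma(\mathcal{L}^+_{w_i})$ and $r\in\R$, I would define $\mu_i^{r}\coloneqq[\varphi,1]\in\Gamma(\mathcal{L}^+_{r w_i})$. Equivalently, $\mu_i^r$ is the section whose equivariant function is $q_i^{\,r}$, where $q_i$ is the positive equivariant function of $\mu_i$. This is well defined because $q_i(pg)^r=\rho_{w_i}(g)^{-r}q_i(p)^r=\rho_{rw_i}(g)^{-1}q_i(p)^r$. This is the one place where positivity of $\mu_i$ is essential, and also where $\R_+$ differs from $U(1)$: real powers are unambiguous, since $\mathfrak{w}^*=\mathfrak{g}^*$. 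I expect this to be the main conceptual step, because earlier results only treat products of sections and not arbitrary real powers.

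\emph{Assembling the product.} By part a) of the Proposition on quantity calculus (equivalently \Cref{lem:tensorproductrep}), the product $\mu\coloneqq\prod_i\mu_i^{r_i}$ is a strictly positive section of $\mathcal{L}^+_{\sum r_iw_i}=\mathcal{L}^+_v$. In the trivialisation it equals $[\varphi,1]$. The earlier isomorphism $\Gamma(\mathcal{L}_v)\simeq(C^\infty(M,\R)\times\Gamma(\mathcal{L}_v^+))/\!\sim$ then gives, for this fixed nonvanishing $\mu$, a unique $\Pi=Q/\mu\in C^\infty(M,\R)$ with $Q=\Pi\mu$. Concretely, by \eqref{eq:Pi_definition}, $\Pi=q\circ\varphi$, which is smooth as a composition of smooth maps. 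Its uniqueness follows because $\mu$ vanishes nowhere, so dividing by $\mu$ recovers $\Pi$. Combining this with the uniqueness of the $r_i$ from the first step gives existence and uniqueness of the pair $(\{r_i\},\Pi)$, as claimed.
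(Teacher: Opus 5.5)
Your proof is correct and takes the route the paper intends. The paper gives no separate proof: it presents the corollary as an immediate consequence of the basis--trivialisation correspondence in the preceding proposition, \Cref{lem:tensorproductrep}, and the number-unit decomposition $Q=\Pi\mu$ with $\Pi=q\circ\varphi$, where $v=\sum_i r_iw_i$ fixes the exponents; your explicit definition of the real powers $\mu_i^{r}$ through $q_i^{\,r}$, which is where positivity is used, fills in a detail the paper leaves implicit.
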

Overall, we have shown how dimensionalities, units, quantity calculus, and the number-unit decomposition immediately manifest once we recognise dimensional analysis as a gauge theory. It is useful to summarise some important notions:
\begin{description}[font=\textsf,labelindent=1cm] \label{descr:basis}
    \item[Base dimensionalities:] a choice of basis $\{w_i\}$ on $\mathfrak{g}^*$.
    \item[Base units:] a section of the product bundle $\prod_i\mathcal{L}_{w_i}^+$ or, equivalently, a trivialisation of the scale bundle.
\end{description}
Let me thereby highlight that imposing a set of base units is (by following the isomorphism \eqref{eq:isomorphism}) precisely the same as choosing a gauge. A global basis of units always exists because $P$ is trivial.

Within this construction, base dimensionalities are logically prior to base units in a precise structural sense (echoing \citet{Sterrett2021}). To specify base units $\mu_i\in\Gamma(\mathcal{L}_{w_i}^+)$ already presupposes weights $w_i$, and these sections determine a trivialisation of the scale bundle only when the $\{w_i\}_{i=1,\dots,k}$ form a basis of $\mathfrak g^*$. Thus a system of base units is a choice of sections associated with a prior choice of base dimensionalities. 

Let me close this section with one brief comment on related mathematical accounts of dimensional analysis. It is intentional that I have not attempted a systematic comparison with such approaches here. Still, it is worth acknowledging that other authors have not been very far from the structure used above. \citet{Tao2012} and \citet{Domotor2017}, for instance, both motivate and make contact with the notion of $G$-torsors as an appropriate mathematical framework. From the present point of view, this is not particularly surprising: as \citet{Baez2009} explains beautifully, the fibre of a principal $G$-bundle \emph{is} a $G$-torsor. In that sense, while the novel idea of this paper is to thoroughly phrase dimensional analysis as a gauge theory, some of the mathematical ingredients have already appeared in parallel approaches.

For the remainder of the paper, we will explore the conceptual consequences of this account as well as more grounded comparisons with ordinary gauge theory and particle physics.

\subsection{Analogy: $U(1)^k$ gauge theory}\label{sec:U(1)}
Consider $U(1)^k$ gauge theory, which, as a close cousin to $\R_+^k$, concerns an abelian Lie group of dimension $k$. The fundamental object of interest here is the principal $U(1)^k$-bundle, denoted $P^{U(1)}\to M$, from which we can build things such as complex associated line bundles $\mathcal{C}_w=P^{U(1)}\times_{\rho_w} \C$. Saying that $\Psi$ is a complex scalar of charge $w$ just means that it is a section $\Psi\in \Gamma(\mathcal{C}_w)$. This is of course analogous to \Cref{def:quantity} of a dimensionful quantity. (Recall however that the charge in this case is a vector $w=(w_1,\dots,w_k)\in \Z^k$, with quantization due to the compactness of $U(1)^k$.) Likewise, the complex scalars $\Psi$ are in one-to-one correspondence with equivariant functions $\psi\colon P^{U(1)}\to \C$. For some $g = (e^{i\theta_1},\dots, e^{i\theta_k}) \in U(1)^k$ the function transforms as
\begin{equation}
    \psi(pg)= \rho_w(g)^{-1}\psi(p) = e^{-i(w_1\theta_1+\cdots + w_k\theta_k)}\psi(p)\,.
\end{equation}
Now, if $\psi$ has charge $w$ and $\psi'$ has charge $w'$, then the product $\psi\psi'$ transforms as a field with charge $w+w'$. This is again just the statement that tensoring one-dimensional representations adds the charges. Likewise, the sum $\psi+\psi'$ transforms covariantly if and only if $w=w'$; otherwise a gauge transformation rotates the two summands out of sync and the sum does not transform by a single overall phase. Dimensional homogeneity is the equivalent statement that substitutes phases for rescaling factors: a linear combination of quantities is `gauge-covariant' if and only if the summands have equal dimensionality.

Given such a complex scalar charged under $U(1)^k$, it is common to see it represented (locally) by a magnitude and a phase
\begin{equation}
    \psi = \varrho\, e^{i\vartheta}\,.
\end{equation}
We understand that $\varrho = \abs{\psi}$ is gauge-invariant and the phase $e^{i\vartheta}$ transforms under the $U(1)^k$ action. In other words, all we have done is split $\psi$ into a `chargeless' and `charged' piece, akin to how we have seen quantities being written as number-unit pairs
\begin{equation}
    Q = \Pi\, \mu\,.
\end{equation}
Analogously, one could say that the coefficient $\Pi$ and unit $\mu$ are respectively `chargeless' and `charged' with respect to the group of scale transformations $\R_+^k$. Allow me to expand further on this correspondence. Although it is rarely ever useful to invoke in particle physics, for $U(1)^k$-charged gauge fields there is a way to decompose the phase into its `base phases'. Analogous to \Cref{prop:scalebundleisomorphism}, the map
\begin{equation}
    \begin{split}
        \mathsf{\Phi}^{U(1)}\colon\quad P^{U(1)} & \to \mathcal{C}_{w_1}^{U(1)}\times_M\cdots \times_M\mathcal{C}_{w_k}^{U(1)}\\
        p & \mapsto ([p,1],\dots,[p,1]) 
    \end{split}
\end{equation}
where $\mathcal{C}_{w}^{U(1)}\coloneqq P^{U(1)}\times_{\rho_w}U(1)$, is also a principal bundle morphism and an isomorphism if $\{w_1,\dots,w_k\}$ is a complete basis of the weight space $\mathfrak{w}^*\simeq \Z^k$. It tells us that there is a freedom to linearly decompose the phase of $\psi$
\begin{equation}
    e^{i\vartheta} = \prod_{i=1}^k \left(e^{i\vartheta_i}\right)^{r_i}
\end{equation}
such that, under a $U(1)^k$ gauge transformation, each base phase acts like a charge-$w_i$ object $e^{i\vartheta_i}\mapsto \rho_{w_i}(g)^{-1}e^{i\vartheta_i}$. Writing
\begin{equation}
    \psi = \varrho\prod_{i=1}^k \left(e^{i\vartheta_i}\right)^{r_i}
\end{equation}
with integer $r_i$ is just the $U(1)^k$ analogue of \Cref{cor:unitdecomposition} which casts a dimensionful quantity in terms of a set of base units. Note again that the choice of basis is not unique: any other set of base phases $\{e^{i\vartheta'_i}\}_{i=1,\dots,k}$ with weights $\{w_i'\}_{i=1,\dots,k}$ for which
\begin{equation}
    e^{i\vartheta} = \prod_i \big(e^{i\vartheta_i'}\big)^{r_i'} = \prod_i \big(e^{i\vartheta_i}\big)^{r_i}
\end{equation}
does the same job. The only difference is that this can only be done locally, as $U(1)^k$ has non-trivial topology; a local trivialisation does not ascend uniquely to a global one. Because the phases have non-vanishing norm, the ratio $\varrho = \psi / e^{i\vartheta}$ is also well-defined and isolates the gauge-invariant part of $\psi$. Analogously, dividing by the unit extracts the scale-invariant $\Pi = Q/\mu$. 

\section{On Non-Constant Units}\label{sec:nonconstant}
A major practical concern in metrology is the constancy of base units for ensuring replicability of precision measurements and comparisons across experimental settings: when measuring lengths for instance, we would like our reference meter stick to retain its `one-meter-ness' over long periods of time or under different laboratory conditions (temperature, pressure, and so on). That not being the case can blur the physical situation at hand when it comes to interpreting measurement outcomes. Consider a simple thought experiment to illustrate this point. Suppose I wish to measure the mass of a metal cube using a particular glass of water as my reference unit; `one glass' is my standard of mass. The next day I repeat the same measurement and find that the cube's mass has increased. Presuming the cube was safely stored and left untouched, could it really be the case that it magically gained mass overnight? Of course not. More realistically, some of the water from the glass of water had evaporated overnight, increasing the relative mass ratio. Maybe I got thirsty and took a sip? The apparent change in measurement outcome is simply an artefact of having chosen a reference whose `one-glass-ness' is not stable over time. I could in fact also have run the inverted experiment --- measure the mass of the glass of water using the metal cube as a reference unit --- and concluded that the glass has lost mass the next day.\footnote{The two conclusions manifest from the mathematical fact that, for a ratio of two unknown functions $\frac{f_1}{f_2}$ with $\partial(\frac{f_1}{f_2}) \neq 0$, it is impossible to tell from this information alone whether the variation is due to $f_1,f_2$ or both simultaneously. Further deduction requires external insight. For instance, if physical reasoning suggests that $f_2$ does not vary much, we can justify that deviations should for the most part be governed by $f_1$.} This second convention more accurately fits our physical understanding of the situation at hand because the metal cube is the more stable standard. It does not make it illegal to use the glass of water as a measure of mass; it just implies we have to be sufficiently careful and aware about comparing measurement results made at different points in space and time.

The historical development of the meter has been, in effect, an attempt to stop measuring against evaporating glasses of water. From 1799, the \emph{M\`etre des Archives} platinum bar was used as the material standard for defining the meter, later replaced by the international prototype meter in 1889, which was kept under controlled conditions. \citep[For a concise timeline, see][Appendix 4.]{BIPM2019} Naturally, material standards are vulnerable to small variations over time. They can be scratched, contaminated, stressed, or change subtly under handling and aging. Before 1983, the speed of light $c$ measured in $\SI{}{\meter\slash\second}$ would shift over time due to drifts in metrological standards. This is no longer the case, since the meter is now \emph{defined} in terms of the distance light travels in a vacuum during a specified fraction of a second. Does this mean the speed of light was a function of time until precisely 1983, after which it magically became constant? Of course not. The apparent change in measurement outcome was, like the metal cube and glass of water thought experiment, an artefact of having chosen a reference unit that was not stable over time. 

The discussion so far is by no means to advertise for a particular system of base units as the `most stable' or `correct' one to adopt. In fact, it should not matter at all whether one adopts a constant or non-constant system of units (that is a human choice after all), as long as one knows how to subtract the possible phantom contributions due to how the standard varies in space and time. My intention is rather to point out that non-constant units evidently play a significant role in real-life metrology, and any mathematical foundation of dimensional analysis ought to build them in from the outset. Notwithstanding, many discussions of dimensional analysis in the physics and philosophy literature tacitly assume constant units. There are at least two ways one may assert this:
\begin{enumerate}
    \item Conduct dimensional analysis on the dummy manifold $M=\{\text{pt}\}$.
    \item Take the stance that genuine units \emph{must} be constant.
\end{enumerate}
I think it is fair to claim that traditional dimensional analysis takes place on the point where, \emph{ipso facto}, the distinction between constant and varying units is invisible. Metrology on an extended manifold is on the other hand much less straightforward; questions such as ``\emph{is our notion of a meter at $x\in M$ the same as our notion of a meter at $x'\in M$}?'' become tricky to address (c.f.~the history of the meter). In fact, we recognise the same situation when needing to compare, say, vectors on a manifold: comparing a vector $v\in T_xM$ with another vector $v'\in T_{x'}M$ cannot be done from afar. That is, one requires a method to parallel transport $v$ and $v'$ to the same tangent space and manipulate them there. Applying the same logic to dimensional analysis, what we require is some notion of a `connection' to instruct how to `parallel transport' units and quantities between spacetime points. It should not come as a surprise that the formalism laid out here naturally takes care of this --- connections are after all the bread and butter of gauge theory!

In the language of principal bundles, the standard description of parallel transport relies on defining a Lie algebra-valued one-form, $\omega\in \Omega^1(P,\mathfrak{g})$, known as an Ehresmann connection. I will not dive deep into its precise definition or properties; the interested reader can consult \citet{KobayashiNomizu1963} or \citet[Appendix A]{Gomes2024}. For our purposes a simple picture is sufficient: intuitively, one can imagine that parallel transport on $P$ takes place on a distribution of `horizontal spaces' $H\subset TP$ that fibre-wise look like the tangent space of the base manifold, $H_p\simeq T_{\pi(p)}M$. The splitting of $T_pP$ into its horizontal and complementary `vertical' subspace is a priori arbitrary --- it is the Ehresmann connection which assigns a horizontal subspace to each point $p\in P$ in a way that is compatible with the $G$-action on $P$. 

More important for our purposes is that the connection $\omega$ induces a covariant derivative on the associated bundles. Given any trivialisation $\varphi$, we can pull back $\omega$ to obtain a Lie algebra-valued one-form on the base manifold, $A^{(\varphi)}\coloneqq \varphi^* \omega$. (The physicist may intuit this as the gauge potential.) Then for any $v\in\mathfrak{g}^*$ and $Q=\Pi\, \mu^{(\varphi)}\in \Gamma(\mathcal{L}_v)$ in the associated bundle, expressed in the frame $\mu^{(\varphi)}=[\varphi,1]$,
\begin{equation}
    \nabla^\varphi Q=(\dd\Pi + v(A^{(\varphi)})\Pi )\otimes \mu^{(\varphi)} \in \Omega^1(M,\mathcal{L}_v)
\end{equation}
is the coordinatised definition for the covariant derivative $\nabla^\varphi$. In particular, we say that $\omega$ is \emph{adapted} to $\varphi$ if $A^{(\varphi)}=0$. 

Specialising to dimensional analysis, we already saw, following \Cref{prop:scalebundleisomorphism}, what it means to have a trivialisation $\varphi$ of the scale bundle: upon fixing a set of base dimensionalities $\{w_i\}_{i=1,\dots,k}\subset \mathfrak{g}^*$, $\mathsf{\Phi}(\varphi)=(\mu_1,\dots,\mu_k)$ gives a set of base units in the associated bundles, $\mu_i\in\Gamma(\mathcal{L}_{w_i}^+)$. Suppose therefore that we chose a connection adapted to $\varphi$. Carrying it through $\mathsf{\Phi}$ implies that $\nabla^\varphi\mu_i=0$ on each $\mathcal{L}_{w_i}^+$. In other words, adapting the connection to a trivialisation $\varphi\colon M\to P$ is precisely equivalent to \emph{choosing} which set of base units are `covariantly constant'. If $Q=\Pi\,\mu^{(\varphi)}$ is expressed in this unit system,
\begin{equation}
    \nabla^\varphi Q = \dd \Pi \otimes \mu^{(\varphi)}\,.
\end{equation}

Let me emphasise this point clearly. There is no privileged system of units that is intrinsically or by definition ``constant''. I take here the point of view that variation presumes a fixed background structure and, as such, before determining any rate of change, we must first articulate \emph{what} we root ourselves against. This is what adapting the connection to some trivialisation $\varphi$ achieves: it determines the set of units $\mathsf{\Phi}(\varphi)=(\mu_1,\dots,\mu_k)$ that we treat as constant via the property that $\nabla^\varphi\mu_i = 0$. Put bluntly, no unit varies against itself. In \Cref{sec:gaugetheory} we found that dimensionalities are logically prior to units; prolonging that line of thought, one can add that units are logically prior to \emph{constant} units. 

So let us ask the question: what exactly happened in 1983, when the speed of light ceased to vary and truly `became a constant'? On a bundle reading, nothing changed about the speed of light section $c\in \Gamma(\mathcal{L}_{w_c})$. The answer is simply that we chose to work with a connection adapted to a different trivialisation; one that implies $\nabla^\varphi c=0$.\\
The same section can appear to be constant in one choice of background units and vary in another, without anything having changed about the underlying quantity. To see this, suppose we pass from $\varphi$ to another section $\varphi^g=\varphi\cdot g$ for some $g\colon M\to G$. This gives rise to a different set of base units
\begin{equation}
    \mu^{\varphi^g}_i = \mu^{\varphi}_i \cdot \rho_{w_i}(g) = [\varphi,\rho_{w_i}(g)]\,.
\end{equation}
Under the group action the connection transforms via\footnote{In general $(\varphi^g)^* \omega = \mathrm{Ad}(g)^{-1}\varphi^*\omega + g^{-1}\dd g$. As the group is abelian, the adjoint yields the identity. The second term comes from observing that $\log\circ \chi = W\circ\log$ (c.f.~\Cref{prop:scalebundleisomorphism}). Differentiating gives $g^{-1}\dd g = W^{-1}(\dd \log \chi(g))$.}
\begin{equation}\label{eq:connection}
    A^{(\varphi^g)} = A^{(\varphi)} + W^{-1}(\dd \log \rho_{w_1}(g),\dots,\dd \log\rho_{w_k}(g))\,.
\end{equation}
Consider then the same weight-$v$ quantity written in terms of this other set of (non-constant) units $Q = \Pi\,\mu^{(\varphi)}=\Pi^g\, \mu^{(\varphi^g)}\in \Gamma(\mathcal{L}_v)$, with ${\Pi^g=\rho_v(g)^{-1}\Pi}$ (c.f.~\cref{footnote:Pitransforming}). Moreover, decompose $v=\sum_i r_iw_i$ in terms of the base dimensionalities. The covariant derivative now picks up a contribution from the fact that the unit itself is no longer constant (more precisely, the conversion factor relating it to the background which the connection is adapted to)
\begin{equation}
    \nabla^\varphi Q = (\dd \Pi^g + \Pi^g \sum_i r_i\dd \log\rho_{w_i}(g)) \otimes \mu^{(\varphi^g)}\,.
\end{equation}
Since $\rho_v(g) = \prod_i (\rho_{w_i}(g))^{r_i}$, equivariance of the numerical coefficient follows
\begin{equation}
    \nabla^\varphi Q = \rho_v(g)^{-1}\dd \Pi \otimes \mu^{(\varphi^g)}\,,
\end{equation}
familiar from gauge-covariant derivatives. Structurally, it should be clear that a local unit transformation is precisely a gauge transformation taking us between trivialisations $\varphi\mapsto \varphi^g$ and appropriately transforming the connection $A^{(\varphi)}\mapsto A^{(\varphi^g)}$. Indeed, $\nabla^\varphi Q$ is also itself a dimensionful (1-form valued) quantity, in the sense that it is a section of the weighted bundle $T^*M\otimes \mathcal{L}_v$.

\subsection{On scalar-tensor theories and frame covariance}\label{sec:scalartensor}

Although non-constant units have received comparatively little direct attention in the dimensional analysis literature, the closely related problem of spacetime variation of fundamental constants has been extensively discussed in cosmology \citep{Uzan2025}. As cosmologists, we are understandably tempted to ponder whether the reference scales and constants we rely on when it comes to physics here on Earth (the speed of light, Planck's constant, Newton's constant, and so on) have varied over cosmic timescales or distances --- where any variation might otherwise pass unnoticed in local laboratory settings. Toying with fundamental constants by, for example, promoting them to new dynamical scalars or allowing them to depend on vacuum expectation values of other fields, provides a wide arena in which to address long-standing problems in cosmology; including but not limited to questions surrounding dark matter, dark energy, inflation, and naturalness. This is nothing new. But since gravity plays a central role in these settings, perhaps an instinctive place to start and appreciate the physics is by granting dynamics to the gravitational coupling $\kappa=8\pi Gc^{-4}$ by replacing $\kappa^{-1}\mapsto f(\phi)$ where $\phi$ is some new scalar. Doing so for standard Einstein-Hilbert gravity coupled to matter yields, in the simplest example, a scalar-tensor theory expressed in the so-called Jordan frame,
\begin{equation}\label{eq:Jordanframe}
    S = \int \dd^4x \sqrt{-g_\mathsf{J}} \frac{f(\phi)}{2}R_\mathsf{J} + S_\text{m}[g^\mathsf{J}_{\mu\nu};\psi]\,.
\end{equation}
These theories are often phenomenologically challenged because they introduce light propagating degrees of freedom and by fifth force constraints. In any event, it is maybe not immediately obvious that the action should face these problems, since it does not contain explicitly a kinetic term for the scalar $\phi$. This is easier to realise after performing a Weyl transformation
\begin{equation}
    g^\mathsf{J}_{\mu\nu}\longmapsto g^\mathsf{E}_{\mu\nu} = \Omega^2g^\mathsf{J}_{\mu\nu}
\end{equation}
with $\Omega^2\in C^\infty(M,\R_+)$ some function. For the particular choice $\Omega^2 = f/M^2$ where $M^2$ is some constant with the same dimension as $f$, the action is put in Einstein frame
\begin{equation}\label{eq:Einsteinframe}
    S = \int \dd^4x \sqrt{-g_\mathsf{E}} \left\{ \frac{M^2}{2}R_\mathsf{E}- \frac{3}{4}M^2\left(\frac{f'(\phi)}{f(\phi)}\right)^2 g_\mathsf{E}^{\mu\nu}\partial_\mu \phi\partial_\nu \phi\right\} + S_\text{m}[\Omega^{-2}g^\mathsf{E}_{\mu\nu};\psi]
\end{equation}
so that the non-minimal coupling has been exchanged with an explicit kinetic term and a modified matter coupling.

This slight digression on the Einstein and Jordan frame representations is to spotlight how Weyl transformations in gravitational theories inevitably lead us to confront non-constant units. Consider the line element $\dd s^2 = g_{\mu\nu} \dd x^\mu \dd x^\nu$ as a natural candidate for a spacetime ruler and measure of distances. In special relativity, for instance, we know that two observers linked by Lorentz transformations will see the same line element, allowing them to unambiguously share the same spacetime ruler. But, as we have just seen, the situation is more subtle once we look to general relativity. Under a Weyl transformation, the line element is no longer invariant, $\dd s^2\mapsto \Omega^2 \dd s^2$, and we are now nudged to think in terms of a new (spacetime-dependent) ruler that differs by a scale factor $\Omega$ \citep[see also][]{Eddington1921,Eddington1923}.

\citet{Dicke1962} realised that the passage between the Einstein and Jordan frame can effectively be viewed as --- and thereby also undone by --- an appropriate unit transformation. That is, working in units of the Planck mass $M$ in Einstein frame yields the same dimensionless ratios when working in units of the coupling $\sqrt{f}$ in Jordan frame.\footnote{This is further detailed in \citet{KaramitsosMuntz2025} \citep[see also][for a related discussion]{Bento2025}. We also highlight that there is really no such thing as \emph{the} Einstein or Jordan frame, as there exist infinitely many smooth families of conformal frames one can choose from. E.g.~all Einstein frames are related by the scale transformations that leave the connection \eqref{eq:connection} invariant.} The question of whether different conformal frames represent physically indistinguishable theories has been dubbed the \emph{frame problem}. Although it was arguably resolved by \citeauthor{Dicke1962} already in \citeyear{Dicke1962}, the debate somehow prevailed many years later \citep[see][Section~3]{FaraoniGunzigNardone1999}.\footnote{Specifically I am referring to the question of semiclassical equivalence. It remains an open question whether conformal frame equivalence also holds at the quantum level.} One attempt to untangle the situation is in the frame-covariant formalism of scalar-tensor theories \citep{Flanagan2004, KuuskJarvVilson2016, Karamitsos2018}. There it is noticed that the action can be cast in a way that is manifestly form-invariant under Weyl transformations,
\begin{equation}
    S[g,\phi,\dots] = S[g^\Omega,\phi^\Omega,\dots]\,.
\end{equation}
Superscripts denote the Weyl-transformed arguments and the ellipsis is a placeholder for other possible fields, couplings, or model functions of the theory. Form-invariance is therefore also reflected at the level of the equations of motion. This is made manifest by defining the `frame-covariant derivative' meant to replace ordinary derivatives,\footnote{Here $w$ denotes the conformal weight, where conventionally $X\mapsto \Omega^{-w}X$ under Weyl transformations. The metric $g\mapsto \Omega^2g$ thus has weight $-2$. Note that $X$ in the original definition is allowed to be tensorial. I suppress indices for ease of discussion.}
\begin{equation}
    D X \coloneqq \dd X - w(\dd \log\sqrt{f})X\,.
\end{equation}
We can check that it transforms covariantly under Weyl transformations,
\begin{equation}
\begin{split}
    D^\Omega X^\Omega &= \dd (\Omega^{-w}X) - w(\dd \log \sqrt{\Omega^{-2}f})(\Omega^{-w}X)\\
    &= \Omega^{-w}DX\,.    
\end{split}	
\end{equation}
This derivative is introduced in \citet{Karamitsos2018} on operational grounds by requiring derivatives to transform covariantly under frame transformations. Hopefully the reader will agree that this smells a lot like a gauge-covariant derivative. There are at the very least many analogies to be made between the frame-covariant formalism and gauge theory; these have not gone entirely unnoticed. \citet[p.~3]{QuirosDeArcia2018} for instance intuit ``Actually, following the spirit of the above examples: coordinate invariance of the laws of gravity in GR and gauge invariance of the laws of electromagnetism, one should require the action and the field equations of the theory -- representing the physical laws -- to be invariant under [Weyl transformations].'' More recently, \citet[p.~8]{JarvKaramitsos2026} motivate the same construction, emphasising its role as a representational tool for building frame-invariant objects: ``The definition of frame-covariant derivatives is useful because it gives us a recipe to construct invariant quantities even when derivatives are involved, and helps us keep track of conformal weights. ...~reminiscent as it is to a gauge-covariant derivative, [it] is a straightforward way to promote quantities to their covariant counterparts.''

What has to my knowledge not been explicitly argued in the literature is that this analogy can be made exact: the frame-covariant derivative \emph{is} literally obtained from the gauge-covariant derivative induced on the weighted bundles associated with the principal $\R_+$-bundle of local scales.\footnote{Note that the quantity $X$ should not be understood as a section of an associated bundle, but rather as the equivariant function related to it via Eq.~\eqref{eq:Pi_definition}. In our notation, for a weight-one quantity represented by $X$, the associated section $Q=X\otimes (\sqrt{f})^{-1}$. If $\nabla$ is adapted to the Einstein-frame Planck mass, then ${\nabla Q=DX\otimes (\sqrt{f})^{-1}}$. More precisely, it is nothing other than the unit-covariant derivative written in the particular gauge ${\mu = (\sqrt{f})^{-1}}$, and the term $-\dd \log\sqrt{f}$ is the local connection 1-form in this trivialisation.}

The bundle language makes this identification more precise. Consider the principal $\R_+$-bundle of conformal metrics $S^2T^*M\supset \mathcal{Q}\to \mathcal{G}\to M$ (c.f.~\cref{footnote:conformalgeometry}). I will assume that the relevant scale bundle has been identified with $\mathcal{Q}$ by an isomorphism of principal $\R_+$-bundles.\footnote{I henceforth assume $k=1$, which is usually taken to be the case in the study of scalar-tensor theories, where one works in natural units.} Then every weighted line bundle is associated to $\mathcal{Q}$. We may for instance write down the bundle of dimensionful metrics
\begin{equation}
    \ell^2g \in \Gamma(\mathcal{Q}_{-2}) \simeq  {}^{\displaystyle \left(\Gamma(\mathcal{Q})\times \Gamma(\mathcal{L}_{-2}^+)\right)} {\big\slash}_{\displaystyle \sim}
\end{equation}
identifying $(g,\ell^2)\sim (\Omega^2g,\Omega^{-2}\ell^2)$. If we want to consider a collection of other fields occupying the sum of associated vector bundles $\mathcal{V}=\bigoplus_\alpha \mathcal{V}_\alpha$, with each $\mathcal{V}_\alpha$ carrying some definite weight, then a field configuration is naturally a section $\Psi$ of the total field bundle $\mathcal{E}\coloneqq \mathcal{Q}_{-2}\oplus \mathcal{V}$. A scalar-tensor theory is then described by an appropriate functional
\begin{equation}
    S\colon \quad \Gamma(\mathcal{E})\to \R
\end{equation}
(i.e.~it takes a familiar form like \eqref{eq:Jordanframe}) built from the fields in $\mathcal{E}$, the spacetime differential, and the covariant derivatives induced from the principal connection on $\mathcal{Q}$. At the level of the dynamics defined by this action, the equations of motion are represented by a differential operator of order $n$. That is, they correspond to a bundle map
\begin{equation}
    \mathsf{D}\colon\quad J^n\mathcal{E}\to \mathcal{W}
\end{equation}
from the $n$\textsuperscript{th} jet bundle to some (weighted) target bundle $\mathcal{W}$, where the solution set consists of those field configurations $\Psi\in \Gamma(\mathcal{E})$ for which ${\mathsf{D}\circ j^n\Psi = 0}$. If $\mathsf{D}$ is $\R_+$-equivariant then the zero section in $\mathcal{W}$ is fixed by the group action,
\begin{equation}\label{eq:equivariantD}
    \mathsf{D}(j^n\Psi) = 0\qquad \implies \qquad \mathsf{D}(j^n(g\cdot \Psi)) = g\cdot \mathsf{D}(j^n\Psi) = 0\,,
\end{equation}
so $g\cdot \Psi$ is a solution whenever $\Psi$ is. $\R_+$-equivariance is precisely the property that fails for ordinary derivatives acting on non-zero-weight fields, and what is restored by promoting ordinary derivatives to covariant ones. However, recall the key insight highlighted earlier; the action functional $S$ is \mbox{(form-)invariant} under the local $\R_+$-action on $\Gamma(\mathcal{E})$, implying that its first variation as well as the Euler-Lagrange equations, given by such an operator $\mathsf{D}$, will be $\R_+$-equivariant.\footnote{Consider the simpler example of a charged massive scalar, $S\supset \int \dd^4x\,\phi^\dagger(\Box+m^2)\phi$. This part of the action is gauge-invariant only when $\Box$ is built from the corresponding gauge-covariant derivatives. The same goes for equivariance of the Klein-Gordon equation of motion $\mathsf{D}\phi=0$, where ${\mathsf{D}\colon \phi\mapsto (\Box+m^2)\phi}$ is viewed as a map. If $\Box$ was built from ordinary derivatives, the gauge group need not be an endomorphism on the space of solutions $\mathrm{ker}\mathsf{D}$.} The classical equivalence of conformal frames is then a simple corollary: the solution space descends to the quotient 
\begin{equation}
    \ker\mathsf{D}\longrightarrow\ker \mathsf{D}\slash C^\infty(M,\R_+)\,.
\end{equation}
Structurally, what this means is that solutions to the field equations in the Jordan and Einstein frame (and any other related conformal frame), respectively lie in the same $\R_+$-orbit in configuration space and describe the same dynamics up to local rescaling. The coordinates probing directions in $\ker \mathsf{D}$ transverse to the orbit are exactly the frame-invariant (gauge-invariant) combinations of fields into weight-zero objects. In \Cref{sec:buckingham} we shall see that this is just another manifestation of the \cref{thm:Buckingham}.

\section{On Dimensionlessness and Unit-independence}\label{sec:dimensionless}
In particle physics, we are accustomed to the fact that not all particles are born the same. Leptons and quarks, for instance, manifest distinct physical properties because of how they transform under different representations of the Standard Model gauge group. Adopting language to pinpoint these properties is incredibly helpful in daily discourse --- and fortunately physicists have standardised such nomenclature: jargon like `the electron is \emph{colourless}' is really just to say that the electron associated bundle transforms in the singlet representation of the $SU(3)_c$ subgroup (quite a mouthful in comparison), while `the Yukawa interaction term is \emph{gauge-invariant}' points out how this term is invariant under the full structure group $G_\text{SM}$. 

A similar story can be told about dimensional analysis. In the theory of measurement, it has long been recognised that not all quantities are born the same, in the sense that quantities can admit different classes of admissible transformations. The classic account is due to \citet{Stevens1946}, who classified scales of measurement into four types: nominal, ordinal, interval, and ratio. In contrast to particle physics, however, the corresponding terminology (and awareness of the fact) is far from established. \citeauthor{Stevens1946} exemplifies how discussions of measurement turn volatile once one conflates different empirical operations under colloquial umbrella terms such as `scale of measurement'. Without a structural foundation, it becomes almost too easy to sweep substantive distinctions under the semantic carpet. This motivates \citeauthor{Stevens1946} to offload part of the burden of articulation onto an underlying group structure. ``Perhaps agreement can better be achieved if we recognize that measurement exists in a variety of forms and that scales of measurement fall into certain definite classes.'' \citep[p.~677]{Stevens1946} Other terms like units, dimensions, and quantities face similar obfuscations. For example, while torque `has the dimension of energy' and can be expressed in the same units, we should not conclude from this alone that they are the same kind of quantity. As \citet[p.~L22]{Emerson2005} puts it: ``The `dimension' of a quantity tells us far less about it than its definition does, if indeed its dimension tells us anything at all.'', to which \citet[p.~120]{Hall2022} echoes this point when he writes ``Unfortunately, `quantity' and `dimension' have become hopelessly entangled.'' In the spirit of \citeauthor{Stevens1946}, let us therefore explore how gauge theory may help us codify some of these terms.\footnote{\label{footnote:gaugeinvariant}All that being said, colloquialisation is not entirely alien to gauge theory either. A mild version of the same phenomenon can be heard in particle physics, where one occasionally encounters loose statements like ``the neutrino is chargeless'', referring to the fact that the neutrino is electromagnetically neutral, transforming trivially under $U(1)_\text{em}\subset G_\text{SM}$. However, to the extent that particles may be `charged' under other gauge subgroups, the colloquial meaning of the statement does not intend to claim that the neutrino is gauge-invariant. Why does this not spark confusion? My guess is that the basic nomenclature has largely stabilised in particle physics, whereas in dimensional analysis one still regularly encounters semantic disputes at the level of the central terms themselves.}

Following \citeauthor{Stevens1946}'s classification, ratio scales involve quantities whose admissible numerical coefficients are related by pure scale transformations, $\Pi \mapsto \Pi ' = \rho\,\Pi$ with $\rho$ some conversion factor, when expressed in different units or representations of said quantity. They are the ones we most often encounter in physics: examples include mechanical quantities such as mass, length, and time. Another example is financial currencies.\footnote{In fact, the connection between currency conversion and gauge theory is already known in the literature. Building on \citet{Young1999}, \citet{Maldacena2015} actually uses the finance of currency conversion as a more pedagogical introduction to the concept of gauge theories. It is notably an $\R_+$ gauge theory.} What these structurally all have in common is that ratio scales admit a unit-independent zero point. That is, $\Pi=0$ always maps to the same zero in any other allowed units. We immediately recognise that admissible transformations of ratio quantities may be captured by the group $\R_+$. 

Interval scales, on the other hand, provide the simplest example where we must look beyond the group $\R_+$ of scale transformations. The canonical example of such a quantity is temperature. Conversion from Celsius to Kelvin or Fahrenheit can involve both a scaling and a shift.\footnote{I want to thank Caspar Jacobs for bringing this to my attention.}
\begin{equation}
    \Pi^{(\SI{}{\celsius})} \mapsto \Pi^{(\SI{}{\kelvin})} = \Pi^{(\SI{}{\celsius})} + 273.15\,,\qquad \Pi^{(\SI{}{\celsius})} \mapsto \Pi^{(\SI{}{\degree F})} = \frac{9}{5}\Pi^{(\SI{}{\celsius})} + 32
\end{equation}
In other words, scale transformations do \emph{not} exhaust the set of possible unit transformations for temperature. One must look to the affine group $\mathrm{Aff}(1,\R)\simeq \R\rtimes \R_+$.\footnote{That is at least before the adoption of the Kelvin scale, which distinguishes `absolute zero temperature', \SI{0}{\kelvin}, as \emph{the} zero point. In statistical physics, for example, one will rarely (if ever) work in units like Celsius or Fahrenheit, related to Kelvin via shifts, such as to discern zero temperature. Fixing an origin as such reduces $\mathrm{Aff}(1,\R)$ to its $\R_+$ subgroup, effectively treating temperature as a ratio quantity instead of an interval quantity.} 

The remaining two classes in \citeauthor{Stevens1946}'s taxonomy may be understood by looking to even larger groups. For ordinal quantities, only hierarchies or inequalities between representatives are preserved under a change of units. Hardness scales of minerals provide a standard example: passing from the Mohs to Vickers preserves order, but not differences or ratios. ``Diamond is harder than quartz'' is indeed a unit-independent statement, but there is no unit-invariant way of stating numerically \emph{how much} harder diamond is than quartz. It is therefore fitting to identify the admissible transformations of ordinal quantities with order-preserving diffeomorphisms $\mathrm{Diff}_+(\R)$. 

Nominal quantities are weaker still. Here, only sameness and difference survive. Library sorting systems, social security numbers, telephone area codes, football player jerseys, and other classificatory tags are all nominal. Although people in practice adopt particular tags with other conveniences in mind (for instance the first four digits of modern arXiv identifiers also refer to the date on which the article was uploaded), relabelling them does not alter the represented content. Nominal quantities might possibly take us beyond the physicist's typical notion of what a quantity entails. Even so, it is conceptually befitting to place it in the hierarchy of structure groups we have seen so far. More specifically, if we insist on using $\R$ as an auxiliary carrier of labels, we can model arbitrary relabelling by the full diffeomorphism group $\mathrm{Diff}(\R)$. I summarise this discussion in \Cref{table}.

From the standpoint of gauge theory, we therefore see that dimensional analysis is not intrinsically tied to the principal $\R_+^k$-bundle. That is merely the special case appropriate to ratio quantities. More generally, we may consider a principal $G$-bundle, where $G$ is the group of admissible transformations appropriate to the types of quantities under discussion. Interval, ordinal, and nominal quantities are obtained by choosing larger structure groups.

\begin{table}[H]
    \renewcommand{\arraystretch}{1.5}
    \centering
    \begin{tabular}{|c|c|c|}
        \hline \textbf{\textsf{Quantity type}} & \textbf{\textsf{Gauge group}} & \textbf{\textsf{Invariant}} \\ \hline
        Ratio & $\R_+$ & Ratios $\tfrac{q_1}{q_2}$\\ \hline
        Interval & $\mathrm{Aff}(1,\R)$ & Ratio of differences $\tfrac{\Delta q_1}{\Delta q_2}$\\ \hline
        Ordinal & $\mathrm{Diff}_+(\R)$ & Inequalities $q_1\leq q_2$\\ \hline
        Nominal & $\mathrm{Diff}(\R)$ & Equality $q_1= q_2$\\ \hline
    \end{tabular}
    \caption{A summary of different types of quantities, the corresponding admissible transformation group, and the kind of invariants one can build from its action. For interval, ordinal, and nominal scales these actions should be understood as actions on an associated fibre, not necessarily as linear representations.}
    \label{table}
\end{table}

Enlarging the structure group also highlights a crucial subtlety concerning the notion of a dimensionless quantity \citep{Emerson2005, Hall2022}. Consider, for instance, the ratio of two temperatures
\begin{equation}
    \frac{T_1}{T_2}\,.
\end{equation}
From common intuition of dimensional analysis we readily appreciate that the result will give us a dimensionless quantity, in the mundane sense that one would ordinarily write down no unit symbol (or at least append a formal `1'). Or, in the gauge theory language laid out in this paper, we may say more concretely that it has weight $w=0$ under scale transformations. Writing a `1' as the unit can however be misleading because it superficially tempts us to compare the result with ordinary numbers. Despite being dimensionless in the scaling sense, the ratio of two temperatures does not behave like a unit-independent number. Take $\SI{0}{\celsius}/\SI{5}{\celsius} = 0$ whereas when expressed in Fahrenheit $\SI{32}{\degree F}/\SI{41}{\degree F} \neq 0$. In contrast, a ratio of temperature differences
\begin{equation}
    \frac{\Delta T_1}{\Delta T_2} = \frac{T_1 - T_1'}{T_2-T_2'}
\end{equation}
is dimensionless \emph{and} will agree numerically no matter what units they are expressed in, precisely because the affine shifts cancel. Dimensionless ratios of \emph{interval} quantities evidently behave differently from dimensionless ratios of \emph{ratio} quantities: the former need not be unit-independent. The same observation of course extends to the remaining scale types. While the clash appears to be mainly semantic \citep{Hall2022}, the gauge theory picture allows us to make the separation more transparent:

\vspace{0.25cm}
\begin{description}[font=\textsf,labelindent=1cm]
    \item[Dimensionless:] invariant under scale transformations $\mathbb{R}_+^k\leq G$.
    \item[Unit-independent:] invariant under the full structure group $G$.
\end{description}
\vspace{0.25cm}

One can draw a straightforward parallel with the `colourless' versus `gauge-invariant' distinction explained at the start of this section (see also \cref{footnote:gaugeinvariant}). For instance, whilst a dimensionless quantity is not necessarily unit-independent, all unit-independent quantities must also be dimensionless. Only for ratio quantities do these two notions coincide (`chargeless' analogously coincides with `gauge-invariant' for pure electromagnetism). The fact that a word like `dimensionless' is so widespread in daily speech perhaps also reflects the over-abundance of ratio quantities in physics, which understandably makes the difference easy to miss. \citet{Roche1998} is one of the rare examples where unit-independence receives separate attention --- actually he arrives incredibly close to the intuition presented here, referring to the invariance under unit transformations as a ``gauge invariance''. This has not, to my knowledge, been defended elsewhere. On the contrary, \citet{Grozier2020} finds \citeauthor{Roche1998}'s nomenclature somewhat unsatisfactory. He elaborates ``because ``variation in unit size'' makes sense only after one has chosen a unit, whereas I see unit-invariance as a property independent of \emph{any} unit system'' \citep[p.~10]{Grozier2020}. While I concur that unit-independence as a property appears prior to specifying a set of base units, I will nevertheless argue that \citeauthor{Roche1998}'s use of the term is structurally accurate and aligns with \citeauthor{Grozier2020}'s sentiment. To see this, recall from \Cref{sec:gaugetheory} how adopting a set of base units is the same as trivialising the underlying principal bundle. It is only once a trivialisation is chosen that it becomes possible to parametrise how gauge transformations act on sections of associated bundles. \citeauthor{Grozier2020} is therefore right in claiming that to see ``variation in unit size'' we must specify the units in the first place. Structurally, however, `\emph{in}variation in unit size' makes sense \emph{before} specifying the units; invariant quantities (those transforming in the trivial representation) are completely indifferent to the choice of trivialisation. \citeauthor{Roche1998}'s terminology therefore correctly captures the structural point that unit-independence corresponds to invariance under the entire structure group and aligns with \citeauthor{Grozier2020} when he writes that invariance itself is not tied to any particular system of units.

On a final note, let me briefly comment on the interpreted attitude towards dimensionful quantities in relation to the physical significance of unit-independence. As a heuristic statement, it is a widely agreed upon view that physically meaningful expressions should not depend on the choice of unit. Indeed, if the choice of units is merely a conventional representation of quantities, it is not bold to opine that something more fundamental to the laws of nature should not be variant under these choices. Following this rule strictly, we should conclude that \emph{only} objects built out of invariant building blocks --- like those presented in \Cref{table} --- are physically meaningful. Philosophers have referred to this as the \emph{Invariance Principle}: a quantity is physically real only if it is invariant under the symmetries of our theory \citep{MollerNielsen2017,Jacobs2021}. This is not particularly uncontroversial as a first approximation, but it does beg the question: where does this leave quantities themselves, which do generally vary under change of scale? Or, for that matter, relations like $F=ma$ built out of variant quantities, which the average physicist most likely will appreciate as physically meaningful. \citet{Jacobs2021} has argued that adopting the Invariance Principle for dimensionful quantities might read a bit too restrictive. \citet{Luce1978} and \citet{NarensLuce1987} are helpful comparisons. Their notion of meaningfulness is that the truth or falsity of a statement is preserved across all its possible representations; what \citeauthor{Jacobs2021} defends as a criterion for \emph{sophistication} \citep{Dewar2019}, which rejects the Invariance Principle but retains that symmetry-related models represent the same state of affairs. Translated into the principal bundle language, the sophisticated interpretation of gauge theories thereby accepts that configurations related by gauge transformations are physically equivalent (provided they are isomorphic), but does not require every physically meaningful object or statement to be gauge-invariant \citep{Jacobs2023,Gomes2026}. In particular, meaningfulness can also extend to \emph{equivariant} objects and statements. Consider the map $\mathsf{F}\colon \mathcal{L}_w\to \mathcal{L}_{w'}$. If $\mathsf{F}$ is equivariant with respect to the group action, then applying a gauge transformation to the statement $\mathsf{F}=0$ gives $\rho_{w'}(g)^{-1}\mathsf{F}=0$ and preserves its truth value across every gauge-related representative (c.f.~Eq.~\eqref{eq:equivariantD}). There is hence no requirement to restrict meaningfulness to exclusively unit-independent quantities in our account. On a sophisticated reading, quantities are themselves physically meaningful despite transforming non-trivially under unit-transformations, and may still enter into meaningful relations whose content is preserved across all choices of units. Compare this again to ordinary gauge theory. As particle physicists, it would also appear somewhat puzzling if we adopted the stance that only gauge-invariant objects are physically meaningful when the rudimentary building blocks (fields and their corresponding particles) are themselves \emph{variant} under gauge. Although I do not want to draw any broad conclusions here about the metaphysics of gauge theory and scales of measurement, I do find it relevant to draw the physicist reader's attention to the sophisticated attitude that appears motivated in the present setting.

\section{The Buckingham-$\Pi$ Theorem}\label{sec:buckingham}
Having discussed several aspects of dimensional analysis and how it may be appropriately viewed as a gauge theory, we now arrive at perhaps the most famous theorem in dimensional analysis.

\begin{theorem}[Buckingham-$\Pi$]\label[buckingham]{thm:Buckingham}
	Any law $F(Q_1, \dots, Q_m) = 0$ expressed in terms of $m$ quantities $Q_1,\dots,Q_m$ of $k$ base dimensions can be rewritten as a law $f(\Pi_1 , \dots, \Pi_{m-k}) = 0$ in terms of $m-k$ dimensionless quantities $\Pi$.
\end{theorem}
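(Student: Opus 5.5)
The plan is to prove the theorem pointwise on $M$, working fibrewise with equivariant functions. It reduces to a statement about the linear action of $\R_+^k$ on $\R^m$ in which each coordinate carries its own weight.

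By \Cref{lem:equivariantfunction}, each $Q_j\in\Gamma(\mathcal{L}_{v_j})$ corresponds to an equivariant function $q_j\colon P\to\R$ with $q_j(p\cdot g)=\rho_{v_j}(g)^{-1}q_j(p)$. Reading the law $F(Q_1,\dots,Q_m)=0$ as a meaningful (equivariant) statement, as in the discussion around Eq.~\eqref{eq:equivariantD}, means its truth value is preserved under the fibre action. Concretely, the zero set $Z=\{F=0\}\subset\R^m$ is invariant under the diagonal action
\begin{equation}
    g\cdot(q_1,\dots,q_m)=\bigl(\rho_{v_1}(g)q_1,\dots,\rho_{v_m}(g)q_m\bigr).
\end{equation}
The task is therefore to show that the orbits of this action, on a suitable open set, are parametrised by $m-k$ weight-zero combinations.

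The key linear-algebra step is the following. Let $V\colon\mathfrak g\to\R^m$ be $X\mapsto(v_1(X),\dots,v_m(X))$. ``Expressed in terms of $k$ base dimensions'' I interpret as the assumption that $v_1,\dots,v_m$ span $\mathfrak g^*$; that is the hypothesis that makes the count $m-k$ correct. After reordering, $v_1,\dots,v_k$ form a basis of $\mathfrak g^*$, and \Cref{prop:scalebundleisomorphism} then makes $\mu_i:=Q_i$ (on the open set where $Q_i>0$, or using $|Q_i|$ and tracking signs) a choice of base units. By \Cref{cor:unitdecomposition}, each remaining $Q_j$ with $j>k$ decomposes uniquely as
\begin{equation}
    Q_j=\Pi_{j-k}\prod_{i=1}^k Q_i^{r_{ji}},
\end{equation}
where the $r_{ji}$ are fixed by $v_j=\sum_i r_{ji}v_i$, and $\Pi_{j-k}=Q_j\prod_i Q_i^{-r_{ji}}\in\Gamma(\mathcal{L}_0)$ is dimensionless by \Cref{lem:tensorproductrep}. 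This gives exactly $m-k$ such $\Pi$'s, and the map $(Q_1,\dots,Q_m)\mapsto(Q_1,\dots,Q_k,\Pi_1,\dots,\Pi_{m-k})$ is a diffeomorphism of the open orthant.

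Next, define
\begin{equation}
    \tilde F(Q_1,\dots,Q_k,\Pi)=F\Bigl(Q_1,\dots,Q_k,\,\Pi_1\textstyle\prod_i Q_i^{r_{1+k,i}},\dots\Bigr).
\end{equation}
Since $\chi$ from \Cref{prop:scalebundleisomorphism} is an isomorphism, the group acts simply transitively on $(Q_1,\dots,Q_k)\in\R_+^k$ and leaves the $\Pi$'s fixed. Invariance of $Z$ then says that whether $\tilde F=0$ holds is independent of $(Q_1,\dots,Q_k)$. Choosing the gauge $Q_1=\dots=Q_k=1$, i.e.\ $g=\chi^{-1}(Q_1,\dots,Q_k)^{-1}$, lets me set $f(\Pi):=\tilde F(1,\dots,1,\Pi)$ and gives $F=0\iff f(\Pi_1,\dots,\Pi_{m-k})=0$. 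This is the gauge-fixing step: the base units absorb the $k$ orbit directions.

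The main obstacle is making the hypotheses honest. There are two points. First, the law must be equivariant (only its zero set invariant), otherwise the reduction fails; I would state this explicitly as the content of ``law''. Second, the chosen base quantities must be nonvanishing, since otherwise $Q_i$ cannot serve as units. I would handle the second by restricting to the open dense set where $Q_1\cdots Q_k\neq0$ and treating signs via $|Q_i|$ with $\mathrm{sgn}\,Q_i$ absorbed into $f$. If the weights span only an $r<k$ dimensional subspace, the same argument yields $m-r$ $\Pi$'s, and I would note this as the correct general count.
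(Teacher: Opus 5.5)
Your proof is correct, but it takes a different route from the paper's. The paper deliberately does not re-prove the theorem. It takes the law to have a $G$-invariant solution set in the typical fibre $V\simeq\R^m$, and invokes, without proof, the fact that on a regular stratification invariant functions depend only on coordinates on $V/G$. It then reads off the count from $\dim(V/G)=\dim V-\dim G+\dim G_{\text{stab}}$, with the generic stabiliser trivial when the weights span $\mathfrak g^*$. You instead build an explicit global cross-section. You take $k$ quantities with independent weights as base units via \Cref{prop:scalebundleisomorphism} and \Cref{cor:unitdecomposition}. Then $G$ acts simply transitively on their values and trivially on the $\Pi$'s, so the slice $Q_1=\dots=Q_k=1$ meets every orbit in $\R_+^k\times\R^{m-k}$ exactly once.

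Your route gives several things the paper's sketch lacks. It is self-contained. It produces the $\Pi$'s explicitly, a step the paper defers as `a subsequent task'. It needs only the zero set of $F$ to be invariant, not $F$ itself, a distinction the paper's appeal to invariant functions glosses over. It is global on the open set where the base quantities are nonzero, rather than a local dimension count. And it makes `a choice of base units is a choice of gauge' the actual mechanism of proof. Your degenerate-weight count $m-r$ also agrees with the paper's stabiliser correction.

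The paper's route buys generality. The orbit-dimension formula applies verbatim to non-abelian and compact groups, as in the $\mathrm{SO}(3)$, $U(1)^3$ and $U(N)^3$ examples. There no simply transitive slice on a subset of coordinates exists and invariants obey syzygies, and this is the bridge to invariant theory and Schwarz's theorem that the section is after.

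One precision on signs: with $|Q_i|$ as units, the rewritten law is really a family $f_\epsilon(\Pi)=0$ indexed by $\epsilon=(\mathrm{sgn}\,Q_1,\dots,\mathrm{sgn}\,Q_k)\in\{\pm1\}^k$. That is discrete dimensionless data beyond the $m-k$ $\Pi$'s. It is consistent with the orbit space having dimension $m-k$ but $2^k$ components.
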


Since \citeauthor{Buckingham1914}'s \citeyear{Buckingham1914} paper, the now eponymous \cref{thm:Buckingham} has occupied a privileged place in the subject. Historically, it crystallised a line of thought already present in \citeauthor{Fourier1878}'s principle of dimensional homogeneity \citep{Fourier1878} and in the later work of \citet{Rayleigh1915}, \citet{Ehrenfest1926}, and \citet{Bridgman1931}, that if a physical equation is to make sense independently of our choice of units, then its form is strongly constrained before any detailed dynamics has been solved \citep{Jalloh2024, Jalloh2026}.

What makes the theorem so useful in practice is possibly also what made dimensional analysis appear slightly peculiar as a subject. On the one hand, the methodology most physicists will have practised --- listing all relevant quantities, noting their dimensions, and reducing the problem to a smaller set of dimensionless combinations --- is straightforward and works incredibly well to constrain the algebraic relations among physical quantities. On the other hand, the fact that this works at all is already telling us something structural about physical laws. For that reason the theorem has been so central not only in physical sciences, but also in philosophical discussions of similarity and the explanatory role of dimensions \citep{Sterrett2009, Sterrett2021, Jalloh2025}. 

For our purposes, however, I do not want to re-prove the theorem merely for the sake of supplying yet another proof. Many roads lead to Rome, and several of them are already well paved \citep{Ehrenfest1916, Gibbings1982, Gibbings2011}. The point of this section is instead to convey a different flavour of the theorem; one inspired by the gauge theory perspective adopted in this paper. Following \Cref{sec:dimensionless}, it is more transparent that the \cref{thm:Buckingham} does not necessarily have anything to do with special properties of quantities or how to engineer laws of physics. Rather, it is a statement about counting \emph{invariants}.\footnote{Also following the discussion in \Cref{sec:dimensionless}, it should be clear that the theorem mainly concerns properties of ratio quantities. Strictly speaking, \Cref{thm:Buckingham} does not require $f$ to be invariant, since the quantities $\Pi$ need not be unit-independent; only dimensionless. $f=0$ is therefore only guaranteed invariant under the $\mathbb{R}_+$ action.} Let us unpack: suppose we have some physical law or statement
\begin{equation}\label{eq:physicallaw}
    \mathsf{F}(Q_1,\dots,Q_m)=0
\end{equation}
where $Q_i\in \Gamma(\mathcal{L}_{w_i})$ are quantities of weight $w_i$. The tuple of quantities $(Q_1,\dots,Q_m)$ is then a section of the sum
\begin{equation}
    \mathcal{E}=\mathcal{L}_{w_1}\oplus \cdots \oplus \mathcal{L}_{w_m}\,.
\end{equation}
For instance, in a textbook dimensional analysis problem where we may be dealing with $m$ real quantities, the typical fibre $V$ of $\mathcal{E}$ is $V\simeq \underbrace{\R\oplus \cdots \oplus \R}_{\text{$m$ times}}$. Locally, the fibre carries the diagonal action
\begin{equation}
    g\cdot (q_1,\dots,q_m)=(\rho_{w_1}(g)^{-1}q_1,\dots,\rho_{w_m}(g)^{-1}q_m)\,.
\end{equation}
The map $\mathsf{F}$ then picks out a subspace $\mathcal{E}_\mathsf{F}\coloneqq \mathsf{F}^{-1}(0)\subset \mathcal{E}$ that obeys Eq.~\eqref{eq:physicallaw}. However, we are not interested in just any type of equation: in dimensional analysis (and by analogy in gauge theory) we are interested in equations whose truth value remains the same under rescaling of units (gauge transformations); c.f.~the discussion at the end of \Cref{sec:dimensionless} or \Cref{sec:scalartensor} for comparison. As a local statement at the level of the fibre, equations of interest are precisely the ones whose (local) solution space descends to the quotient over the typical fibre, $V/G$. That is, a map $\mathsf{F}$ for which $(Q_1,\dots,Q_m)\in \mathcal{E}_\mathsf{F}$ implies $g\cdot(Q_1,\dots,Q_m)\in \mathcal{E}_\mathsf{F}$. 

It is then a fact that (on a regular stratification) $G$-invariant functions on $V$ depend only on coordinates on the orbit space $V\slash G$.\footnote{The orbit space $V\slash G$ may fail to be Hausdorff when some orbits are not closed. Consider for example $\R_+$ acting on $\R^2$ by $g\cdot (x,y)=(gx,g^{-1}y)$. The orbit of $(x,0)$ with $x>0$ has $(0,0)$ in its closure, however the origin is its own distinct orbit. $\R^2\slash \R_+$ is therefore non-Hausdorff. That being said, I would not think this to be a problem for Buckingham-$\Pi$, since it is intended as a statement about generic quantities away from ill-behaved loci such as this; e.g.~where certain quantities vanish.} The local dimension of the quotient is
\begin{equation}
    \mathrm{dim}(V\slash G)=\dim(V)-\dim(G)+\dim(G_\text{stab})\,,
\end{equation}
where $G_\text{stab}$ is the generic stabiliser. When the group acts freely then the last term vanishes. It becomes simple to see why we should recover the \cref{thm:Buckingham}: if there are $m$ real-valued ratio quantities whose weights span the $k$ base dimensionalities, we may identify $V=\R^m$ and $G=\R_+^k$, and thus $\dim(V\slash G)=m-k$.

Let us look at some illustrative examples:
\paragraph{Rotational symmetry:} consider $\mathrm{SO}(3)$-invariant functions over $\R^3$, with the group acting by rotations. The stabiliser of any non-zero vector in $\R^3$ is $\mathrm{SO}(2)$, so away from the origin we find that $\dim(\R^3\slash \mathrm{SO}(3))=3-3+1=1$. In other words, such functions depend only on one variable (the radius).

\paragraph{Simple pendulum:} the simple pendulum is a textbook example in dimensional analysis. The task is to find a general expression relating various parameters of the system. A preliminary physics guess is that the period of the pendulum $T$ may depend on its length $\ell$, the gravitational acceleration $\gamma$ (to avoid conflating with group elements $g$), and mass $m$. We identify three base dimensionalities: mass, length, time (all of the ratio type). As a gauge theory, the structure group is therefore $G=\R_+^3$ and acts on the fibre $V\simeq\R^4$ where the quantities take their values. Assuming the weights span $\mathfrak{g}^*$,\footnote{If too many weights are degenerate, there is a possibility that the stabiliser becomes non-trivial as the weights fail to span all of $\mathfrak{g}^*$. E.g.~if all weights are proportional, ${\dim(\R^4\slash\R_+^3)=4-3+2=3}$.} one readily finds $\dim(\R^4\slash \R_+^3)=1$. Any unit-independent function over the four quantities must therefore locally be expressible as a function of one invariant combination,
\begin{equation}
    \Pi = \prod_{Q\in \{T,\ell,\gamma,m\}} Q^{r_Q}\qquad\text{such that}\qquad \sum_Q r_Qw_Q=0
\end{equation}
for some set of $r_Q\in \R$. 

Note how this approach does not directly tell us what the invariant $\Pi$ is. That is a subsequent task requiring us to solve the set of equations involving the exponents $r_Q$ and weights $w_Q\in\mathfrak{g}^*$. Suppose we pick a ``mechanical'' basis on $\mathfrak{g}^*$, such that the weight vector components read as follows:
\begin{table}[H]
    \centering
    \begin{tabular}{c*{3}{c}}
         & Mass & Length & Time\\ \hline
        $w_T$ & $0$ & $0$ & $1$\\
        $w_\ell$ & $0$ & $1$ & $0$\\
        $w_\gamma$ & $0$ & $1$ & $-2$\\
        $w_m$ & $1$ & $0$ & $0$\\
    \end{tabular}
\end{table}
One finds the unique solution $\Pi = T^2\ell^{-1}\gamma$ up to an overall power. The simple pendulum is therefore described by some unit-invariant equation $\mathsf{f}(\Pi)=0$. As always, symmetries can only take us so far when it comes to fixing the precise law or dynamics of the system --- only through further physical insight or more detailed calculations, e.g.~by explicitly solving the equations of motion, would we find for the simple pendulum (in the small-angle approximation) that $\mathsf{f}$ takes the form
\begin{equation}
    \mathsf{f}(\Pi) = \sqrt{\Pi}-2\pi=0\,.
\end{equation}

\paragraph{A ``$U(1)$ simple pendulum'':} in the spirit of \Cref{sec:U(1)}, let us explore how the same dimensional analysis exercise carries over to gauge theory. Consider a $U(1)^3$ gauge theory describing four charged complex scalars $\{\Psi_i\}_{i=1,\dots,4}$. A relevant exercise is to find the most general (local) gauge-invariant function written in terms of the four fields (for instance a scalar potential $\mathsf{V}(\Psi_i)$). Since the fields take values in $V\simeq \C^4$ (and assuming their weights span $\mathfrak{w}^*$), we compute ${\dim_\R(V\slash G)=8-3=5}$. Any gauge-invariant function can therefore be written in terms of the real and imaginary part of the complex singlets
\begin{equation}
    \Pi = \prod_{i=1}^4\Psi_i^{r_i}\bar{\Psi}_i^{r'_i} \qquad\text{such that}\qquad \sum_i (r_i-r_i')w_i=0\,.
\end{equation}
This is just the $U(1)$-analogue of the simple pendulum, with additional real invariants appearing because the fields are complex. Once again, this exercise alone is not sufficient in order to solve for the expression of $\Pi$. Suppose that the charges took the following values in the canonical basis of $\mathfrak{w}^*\simeq \Z^3$:
\begin{table}[H]
    \centering
    \begin{tabular}{c*{3}{c}}
         & $U(1)_1$ & $U(1)_2$ & $U(1)_3$\\ \hline
        $w_1$ & $0$ & $0$ & $1$\\
        $w_2$ & $0$ & $1$ & $0$\\
        $w_3$ & $0$ & $1$ & $-2$\\
        $w_4$ & $1$ & $0$ & $0$\\
    \end{tabular}
\end{table}
Four of these invariants are the norms $\Pi_i=\Psi_i\bar{\Psi}_i$, while the remaining invariant is the less trivial complex combination $\Pi_5=\Psi_1^2\bar{\Psi}_2\Psi_3$, analogous to $T^2\ell^{-1}\gamma$ in the previous example. Although na\"ively this yields six real invariants (two coming from the real and imaginary part of $\Pi_5$), they are subject to one constraint, $\Pi_5\bar{\Pi}_5=\Pi_1^2\Pi_2\Pi_3$. Thus we find that the quotient has the expected five real dimensions. A general gauge-invariant function can thus be written as $\mathsf{f}(\Pi_1,\dots,\Pi_4,\mathrm{Re}\,\Pi_5, \mathrm{Im}\,\Pi_5)$ subject to this relation.

\paragraph{The Standard Model Yukawa sector:} let us look at a slightly more involved example familiar from particle physics. In the quark sector of the Standard Model, the gauge and kinetic terms admit a global flavour symmetry $G=U(N)_{Q_L}\times U(N)_{u_R}\times U(N)_{d_R}$, where $N$ is the number of generations. This symmetry is broken by the Yukawa interactions
\begin{equation}
    \mathcal{L}_\text{Yukawa} = - \bar{Q}_LY_u\widetilde{H}u_R - \bar{Q}_LY_dHd_R + \text{h.c.}
\end{equation}
where $Y_u,Y_d\in \C^{N\times N}$. Equivalently, however, one may retain the flavour symmetry by regarding the Yukawa matrices as spurions transforming according to
\begin{equation}
    Y_u\mapsto V_QY_uV_u^\dagger\,,\qquad Y_d\mapsto V_QY_dV_d^\dagger
\end{equation}
for $(V_Q,V_u,V_d)\in G$. The problem of counting the number of independent physical parameters controlling the Yukawa couplings then becomes a problem of determining the dimension of the orbit space over the pair $(Y_u,Y_d)$. The only generic stabiliser is $V_Q=V_u=V_d=e^{i\alpha}1_N$  (corresponding to the baryon number $U(1)_B$) acting trivially on $Y_u$ and $Y_d$. Hence the orbit space has dimension 
\begin{equation}
    \dim_\mathbb{R}((\C^{N\times N}\oplus \C^{N\times N})\slash U(N)^3)=4N^2-3N^2+1=N^2+1\,.
\end{equation}
In other words, the Yukawa matrices contain the information of $N^2+1$ independent physical parameters. It is well-known that these comprise the $2N$ quark masses after symmetry breaking, $\tfrac{N(N-1)}{2}$ mixing angles, and $\tfrac{(N-1)(N-2)}{2}$ CP-violating phases. 

Finding all the invariant combinations is once again a supplementary exercise \citep[see][Section~5]{Jenkins2009}. Define ${X_u\coloneqq Y_uY_u^\dagger}$ and ${X_d\coloneqq Y_dY_d^\dagger}$. The relevant flavour action then acts via simultaneous conjugation
\begin{equation}
    X_u\longmapsto V_QX_u V_Q^\dagger\,,\qquad X_d\longmapsto V_QX_d V_Q^\dagger\,. 
\end{equation}
For $N=3$ in the Standard Model, it turns out that there are ten algebraically independent CP-even invariants 
\begin{equation}
\begin{gathered}
    \tr X_u \qquad \tr X_d\\
    \tr X_u^2 \qquad \tr X_uX_d \qquad \tr X_d^2\\
    \tr X_u^3 \qquad \tr X_u^2X_d \qquad \tr X_uX_d^2 \qquad \tr X_d^3\\
    \tr X_u^2X_d^2
\end{gathered}
\end{equation}
The remaining CP-odd invariant was found by \citet{Jarlskog1985},
\begin{equation}
    J\propto i\tr [X_u,X_d]^3\,.
\end{equation}
Its square yields a polynomial in the ten CP-even invariants. Just like in the previous example, despite having na\"ively discovered more than ${N^2+1=10}$ invariants expected from dimension counting, the surplus is accounted for by algebraic constraints. The most general $G$-invariant polynomial is accordingly of the form\footnote{\label{footnote:syzygies}Coordinates on the orbits may thus not suffice to generate the $G$-invariant algebra, since additional invariants can be required to distinguish different branches or orientations. For example, let $U(1)$ act diagonally on $(x,y)\in\C^2$. A convenient set of real polynomial invariants is $\abs{x}^2,\abs{y}^2, \mathrm{Re}(\bar{x}y),\mathrm{Im}(\bar{x}y)$ subject to the relation $\mathrm{Re}(\bar{x}y)^2+\mathrm{Im}(\bar{x}y)^2 = \abs{x}^2\abs{y}^2$. This is consistent with $\mathrm{dim}_\R(\C^2\slash U(1))=3$. The orbit can be parametrised by two magnitudes and one relative angle $\theta$. However, if one only keeps $\mathrm{Re}(\bar{x}y)$, one cannot distinguish between relative phases $\theta$ and $-\theta$; the phase-odd invariant $\mathrm{Im}(\bar{x}y)$ is also required to explore the full $G$-invariant algebra.}
\begin{equation}
    \mathsf{f}_1(\Pi_\text{CP-even}) + J\mathsf{f}_2(\Pi_\text{CP-even})\,.
\end{equation}

I hope it is clear from the examples above that the gist of the \cref{thm:Buckingham} is not unique to dimensional analysis, and that it generalises to much broader applications in gauge theory and particle physics as well. Counting and determining invariant combinations is evidently not a specialist task in dimensional analysis. It is quite the opposite: there is an entire field in mathematics dedicated to it, known as \emph{invariant theory}.

Suffice it to say, the overarching goal of classical invariant theory is to identify the properties of mathematical objects that remain invariant under a specified group action. Given a group $G$ acting linearly on a finite-dimensional vector space $V$ over a field $\Bbbk$, one can study the properties of the space of $G$-invariant polynomials $\Bbbk[V]^G$. For instance, Hilbert famously proved for a wide range of cases that $\Bbbk[V]^G$ is a finitely generated algebra over $\Bbbk$. From the examples above we can imagine how determining all generators and relations (syzygies) among them is generally a very complicated task. For a general compact Lie group, invariant-building becomes much harder because fields usually sit in non-trivial representations and involve contracting fields into singlets using various available invariant tensors, such as Kronecker deltas, Levi-Civita symbols, traces, etc. These contractions can satisfy non-trivial algebraic relations, so the number of invariant generators need not coincide with the dimension of the orbit space (c.f.~the Yukawa example and \cref{footnote:syzygies}).

It should be highlighted, though, that applications of invariant theory to gauge theory and particle physics have been studied for many years. Suppose, for example, we are tasked to build the most general extension of the Standard Model by including higher-dimensional operators, as in the Standard Model Effective Field Theory (SMEFT) or Higgs Effective Field Theory (HEFT). There are many ways to combine and contract fields into Lorentz- and gauge-invariant combinations. At low mass dimension this can be done by hand. For higher mass dimension, however, the combinatorics quickly become unmanageable if one also takes into account redundancies due to identities, integration by parts, and so on. Counting and organising the algebraically independent operators at each order in the EFT expansion is therefore naturally a problem for invariant theory. Hilbert series methods have become a systematic way to attack this problem and are extensively explored in \citet{Jenkins2009, Hanany2011, LehmanMartin2015, LehmanMartin2016, Henning2016} as well as in later literature. 

All this begs a natural question. If the \cref{thm:Buckingham} is an $\mathbb{R}_+^k$ instance of a more generic invariant-counting problem, is there an analogous statement for other groups --- possibly ones familiar from particle physics? The answer is \emph{yes}. A beautiful theorem due to \citeauthor{Schwarz1975} goes as follows:

\begin{theorem}[\citet{Schwarz1975}]
    \label[schwarz]{thm:schwarz}
    Let $V$ be a real finite dimensional representation of a compact Lie group $G$. Let $\Pi_1,\dots,\Pi_n$ be generators of $\R[V]^G$, the algebra of $G$-invariant polynomials on $V$. Each $G$-invariant $C^\infty$-function $F$ on $V$ has the form $F=f(\Pi_1,\dots,\Pi_n)$ for some $C^\infty$-function $f$ on $\R^n$.
\end{theorem}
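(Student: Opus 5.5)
The strategy is to study the Hilbert map $p=(\Pi_1,\dots,\Pi_n)\colon V\to\R^n$ and show that every $G$-invariant $F\in C^\infty(V)$ factors smoothly through it. I would work by induction on $\dim V$, combining three ingredients. The first is a local description of $F$ away from the origin via the slice theorem. The second is a formal (Taylor series) description of $F$ at the origin via averaging. The third is a gluing step that produces a single smooth $f$ on $\R^n$.

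First I would record the basic properties of $p$. Since $G$ is compact, I may fix a $G$-invariant inner product on $V$. The invariant polynomial $\abs{v}^2$ is then a polynomial in the $\Pi_i$, so $p$ is proper. Its image $X=p(V)$ is a closed semialgebraic set by Tarski--Seidenberg. Next, $p$ separates orbits: the orbits are compact, so any two distinct orbits can be separated by some polynomial. Averaging that polynomial over the Haar measure, degree by degree, keeps it a polynomial and makes it invariant, so it is a polynomial in the $\Pi_i$. Hence $F$ descends to a continuous function $\bar f$ on $X$ with $F=\bar f\circ p$. The whole task is to show that $\bar f$ extends to a $C^\infty$ function on $\R^n$.

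Away from the origin I would use the slice theorem. For $v\neq0$ with stabiliser $H=G_v$, a $G$-invariant neighbourhood of the orbit $Gv$ is $G\times_H S$, where $S$ is the normal slice, an $H$-representation. Splitting $S=S^H\oplus S'$, the point $v$ lies in $S^H\setminus\{0\}$, so $\dim S'<\dim V$. Restricting $F$ to the slice gives an $H$-invariant function on $S^H\times S'$. The induction hypothesis (with parameters in $S^H$) then writes it as a smooth function of coordinates on $S^H$ and of generators $\sigma_j$ of $\R[S']^H$. The delicate point here is to re-express the $\sigma_j$, near $v$, as smooth functions of the restricted $\Pi_i$. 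I would try to do this by showing that $p$ restricted to the slice induces a local homeomorphism of $S/H$ onto its image in $X$. The same argument applied to the map $S\to\R^n$ given by the $\Pi_i|_S$ should then give smooth local $f_v$ with $F=f_v\circ p$ near $Gv$. Since $p$ is proper and orbits are separated, these local solutions can be patched by a partition of unity on $\R^n\setminus\{0\}$. This gives $f$ off the origin.

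At the origin I would first handle the formal part. The Taylor series of $F$ at $0$ is $G$-invariant, because averaging commutes with taking homogeneous components. Hence it is a formal power series in the $\Pi_i$. Borel's lemma then gives a smooth $f_0$ with $F-f_0\circ p$ flat at $0$. This reduces the problem to a flat invariant function $F$.

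The main obstacle is this flat case. One must show that the function $f$ built on $\R^n\setminus\{0\}$ (or on $X\setminus\{0\}$) extends to a $C^\infty$ function on $\R^n$ that is flat at $0$. This needs quantitative control: bounds on the derivatives of the local factorisations in terms of the distance to the origin. These bounds come from the homogeneity of the $\Pi_i$ under dilations $v\mapsto tv$ together with a Łojasiewicz inequality for the semialgebraic set $X$. The flatness of $F$ then absorbs these polynomial losses, and Whitney's extension theorem produces the desired smooth $f$. I would first try to obtain the estimates by rescaling to the unit sphere, where compactness of the sphere makes the induction step uniform.
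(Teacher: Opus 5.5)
The paper gives no proof of this statement; it only quotes Schwarz (1975). So the comparison is with the standard argument, whose architecture you follow: induction via slices, Borel at the origin, reduction to flat functions, and weighted dilations. Two steps, however, do not go through as written, and the first is a genuine gap.

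\textbf{The slice step.} A local homeomorphism $S/H\to X$ carries no smoothness information: $t\mapsto t^3$ is a homeomorphism with non-smooth inverse. You also cannot feed the $\Pi_i|_S$ into the induction hypothesis, since they need not generate $\R[S]^H$. For $SO(2)$ on $\R^2$ at $v=(1,0)$, the slice invariant is $t$ while $p|_S=t^2$; here $t$ is a smooth function of $t^2$ near $t=1$ only because $d(t^2)\neq 0$ there. What you need is a smooth map $\Theta$ near $p(v)$ with $q=\Theta\circ p$ on the slice near $v$, where $q=(t,\sigma)$ is the Hilbert map of $S=S^H\oplus S'$. Then $F|_S=h\circ q=(h\circ\Theta)\circ p$. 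This is a Luna-type nondegeneracy statement and needs its own argument. One way to get it:
\begin{itemize}
\item Take the coordinates $t_k$ on $S^H$ and a \emph{minimal} homogeneous generating set $\sigma_j$ of $\R[S']^H$, and extend them to $G$-invariant functions on the tube.
\item Approximate these in $C^r$ near $Gv$, with $r\geq\max_j\deg\sigma_j$, by $G$-invariant polynomials $Q_\ell(\Pi)$, using Weierstrass approximation and Haar averaging.
\item Write $p|_S=P\circ q$ and set $A=Q\circ P$, so that $Q\circ p|_S=A\circ q$.
\end{itemize}
Minimal generators are linearly independent modulo decomposables, so relations among them have no linear part. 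Hence $dA$ at $q(v)$ is determined by the jet of $Q\circ p|_S$ at $v$, and is therefore close to the identity. So $A$ is a local diffeomorphism, and $\Theta=A^{-1}\circ Q$ does the job.

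\textbf{The flat step.} Here the mechanism for the estimates is missing. Your induction hypothesis is purely qualitative, so neither compactness of the sphere nor a {\L}ojasiewicz inequality yields a factorisation whose derivatives are controlled by those of $F$. The natural source of such bounds is the open mapping theorem. Let $U$ be a relatively compact neighbourhood of $p(\{1/2\leq|v|\leq 2\})$ in $\R^n\setminus\{0\}$. Your off-origin step makes $p^*\colon C^\infty(U)\to C^\infty(p^{-1}U)^G$ a continuous linear surjection of Fr\'echet spaces, hence open. So each $\tilde F$ has a preimage $\tilde f$ with $\|\tilde f\|_{C^r}\leq C_r\|\tilde F\|_{C^{s(r)}}$ on fixed compact sets. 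Now:
\begin{itemize}
\item Reduce first to homogeneous generators, so that weighted dilations $\delta_t$ with $p(tv)=\delta_t p(v)$ exist.
\item Cut $F$ into invariant dyadic pieces $F_j$, rescale each to the fixed annulus, factor it there, and rescale back with $\delta_t$.
\item This costs a factor $2^{jDr}$ on $C^r$ norms, with $D$ the maximal generator degree, and flatness absorbs it.
\item Because the preimage depends on $r$, the order $r(j)\to\infty$ used for the $j$-th piece must be chosen diagonally.
\end{itemize}
Then $f=\sum_j f_j$ is smooth with $f\circ p=F$, and no Whitney extension is needed.

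Finally, at points of $V^G\setminus\{0\}$ the slice is all of $V$ with $H=G$. So the induction hypothesis must be the parametric statement, and the Borel and flat steps have to be carried out along $\R^a\times\{0\}$, not only at the origin.
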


\cref{thm:schwarz} on invariant functions is the cousin of the \cref{thm:Buckingham} for compact Lie groups, although it does not predict the number $n$ of invariant generators. For a general compact Lie group $G$, determining $n$ is a central problem of classical invariant theory (and therefore not one I shall attempt to solve here).

From the present discussion I hope to have conveyed that the \cref{thm:Buckingham} and its application to dimensional analysis are but an unusually tractable corner of a much larger invariant-theoretic problem. In the case of ordinary dimensional analysis the solution is relatively straightforward. In particle physics, on the other hand, the structure is usually much richer. That these methods and theorems are already woven into the technology used in modern gauge theory is a curious hint that the gauge-theoretic reading of dimensional analysis was never actually far away. Perhaps they were only ever speaking different dialects of the same language.

\newpage

\section{Conclusion}
In \citeyear{Weyl1918}, \citelinktext{Weyl1918}{Hermann Weyl} introduced the world to what is now retrospectively regarded as the first gauge theory. His original model --- an attempt to unify gravity and electromagnetism --- did not involve the compact Lie groups we usually associate with modern particle physics. Rather, it was built on the idea of \emph{scale} symmetry, with $\R_+$ playing the role of the gauge group. Indeed, \citeauthor{Eddington1921} already made explicit at the time the intuition that \citeauthor{Weyl1918}'s gauge transformations could be understood as local changes of units (\citealp{Eddington1921}; see also \citealp[Sec.~85--86]{Eddington1923}). \citeauthor{Weyl1918}'s model, however, did not achieve its intended physical unification, and the local scale transformations soon gave way, in what would become the mainstream continuation of gauge theory, to local phase transformations during the late-1920s development of quantum mechanics.\\
I find the history particularly noteworthy. At roughly the same time, the early twentieth century saw an active debate concerning the methodology and metaphysical foundations of dimensional analysis \citep{Jalloh2024}. I will not speculate why these circles did not merge their ideas, but it is useful to observe that the two traditions developed in different disciplines: gauge theory in relativity and field theory, and dimensional analysis in more classical settings such as mechanics and hydrodynamics. Even so, they were never very far apart. This becomes especially clear when considered in the same light, as later made explicit by \citet{Dicke1962}, who placed unit transformations and Weyl transformations on the same footing. The connection is easy to miss, not least because the habitual use of natural units in theoretical physics today tends to hide the choice of scale.

One purpose of the present paper has been to assemble the pieces of this puzzle. It has long been unclear what mathematical foundation underlies dimensional analysis. Yet, as a methodology, dimensional analysis is something most physicists learn early in their careers --- it is simple, intuitive, and oftentimes incredibly powerful. A mathematical scaffolding should not merely accommodate these qualities, but strengthen and extend them. On structural grounds, and with these perspectives in mind, I defended the claim that dimensional analysis can quite literally be understood as a gauge theory. In the simplest case of ratio quantities, it is equipped with structure group $\R_+^k$, where $k$ is the number of base dimensionalities. Dimensional quantities then appear as analogues of charged fields, as associated bundles transforming under different representations of the structure group. Quantity calculus also follows immediately from basic representation theory. Furthermore, I showed that any complete set of base units corresponds to a trivialisation of the underlying principal bundle; in this precise sense, a choice of base units is a choice of gauge.

The same framework also helped disentangle separate several subtleties related to scales of measurement. In \Cref{sec:scalartensor}, I clarified the physical equivalence between conformal frames in scalar-tensor theories from a bundle perspective. In particular, frame transformations are structurally equivalent to, and can thus be undone by, appropriate unit transformations, echoing arguments by \citet{Dicke1962} and the frame-covariant formalism (\citealp{Flanagan2004, KuuskJarvVilson2016, Karamitsos2018}; see also \citealp{KaramitsosMuntz2025}). In \Cref{sec:dimensionless}, I argued that different types of scales of measurement can also be encoded in larger structure groups. This, in turn, clarifies a semantic challenge in distinguishing dimensionless from unit-independent quantities \citep{Emerson2005, Hall2022}: the former are invariant under $\R_+^k$, whereas the latter are invariant under the full structure group $G\geq \R_+^k$.

Throughout this paper, I identified several ways in which dimensional analysis and gauge theory illuminate one another. Drawing analogies with $U(1)^k$ gauge theory, for example, gives a useful picture of how number-unit decomposition of quantities parallels the magnitude-phase decomposition of complex fields. On the other hand, $\R_+^k$ gauge theory is structurally less rich: in the setting considered here, it has no charge quantisation, and has trivial topology over paracompact bases. Moving from dimensional analysis back into gauge theory, I reinterpreted the \cref{thm:Buckingham} as a statement about gauge-invariant functions on representation spaces. Several pedagogical examples were used to demonstrate how the theorem carries over to similar counting problems in gauge theory. Notably, \cref{thm:schwarz} is the analogue of the \cref{thm:Buckingham} for compact Lie groups. Counting the minimum number of $\Pi$'s for a general compact Lie group and representation, however, is a difficult open problem of invariant theory in mathematics. Taken together, these perspectives strongly suggest that dimensional analysis and gauge theory are structurally equivalent formalisms: both rely on the same symmetry and covariance principles.

Allow me to close with a speculative observation concerning the role of dimensional analysis and units in fundamental physics. Throughout the present work, we have observed that the symmetry groups associated with scales of measurement are generally non-compact. If units are to be taken seriously as reflecting an internal gauge symmetry, on a par with the gauge groups of particle physics, this may have broader implications for the foundations of the theory. For example, there is an argument going back to \citet{BanksSeiberg2011} \citep[see also][]{GarglianoTudball2026} that effective theories obtainable from quantum gravity should not possess any non-compact gauge symmetries. If this adventuresome line of thought applies, it would suggest that effective theories of quantum gravity must fundamentally lack dimensionality; the respective gauge symmetries cannot be physical or dynamical, and are at best representational. An alternative possibility is that local changes of units must be tied to isomorphic symmetries that are not internal, such as Weyl transformations. That this framework potentially allows such questions even to be formulated precisely is, in my view, an inspiring prospect.

\section*{Acknowledgements}
I would like to thank Ali Bakhordarian, Ed Copeland, Zongzhe Du, Erik Søndergaard Gimsing, Henrique Gomes, Caspar Jacobs, Tony Padilla, and Kieran Wood for many helpful discussions and comments on the draft. I reserve a special thanks to Sotirios Karamitsos for many long discussions on dimensional analysis and units, and for bringing my attention to this topic. 
%I would especially like to thank Sotirios Karamitsos for many long discussions on the topic and for his influence on my interest in dimensional analysis.

%%%%%%%%%%%%%%%%%%%%%% BIBLIOGRAPHY %%%%%%%%%%%%%%%%%%%%%%

%\newpage
\begin{small} %%Makes bib small text size
\singlespacing %%Makes single spaced
\end{small} %%End makes bib small text size

\end{document}